\definecolor{dark-blue}{rgb}{0.05,0.25,0.85}
\newtheorem*{approxpreprocessing}{Approximate polynomial time pre-processing}
\newtheorem*{approxkernel}{Approximate kernelization}
\newtheorem*{subdiv}{Shallow subdivisions}
\newtheorem*{nd}{Nowhere denseness}
\newtheorem*{dom}{Domination and distance-$r$ domination}
\newtheorem*{domcore}{Domination core}
\newtheorem*{coveringfam}{Covering family}
\newtheorem*{steiner}{Steiner tree}
\newtheorem*{wcoldef}{Weak coloring numbers}
\newtheorem*{avoidingpath}{$A$-avoiding path}
\newtheorem*{projprofile}{Projection profile}
\newtheorem*{graphgp}{The graph $G'$}
\newtheorem*{graphgdot}{The graph $\dot{G}$}
\newcommand{\wcol}{\mathrm{wcol}}
\newcommand{\WReach}{\mathrm{WReach}}
\newcommand{\Oof}{\mathcal{O}}
\newcommand{\CCC}{\mathcal{C}}
\newcommand{\FFF}{\mathcal{F}}
\newcommand{\GGG}{\mathcal{G}}
\newcommand{\QQQ}{\mathcal{Q}}
\newcommand{\YYY}{\mathcal{Y}}
\newcommand{\fwcol}{f_{\wcol}}
\newcommand{\fproj}{f_{\mathrm{proj}}}
\newcommand{\Opt}{\mathrm{Opt}}
\newcommand{\N}{\mathbb{N}}
\newcommand{\R}{\mathbb{R}}
\renewcommand{\phi}{\varphi}
\renewcommand{\epsilon}{\varepsilon}
\newcommand{\projprof}{\widehat{\mu}}
\title{Lossy kernels for connected distance-$r$ domination on nowhere dense graph classes
\footnote{The work of Sebastian Siebertz is supported by the National Science Centre of Poland via POLONEZ grant agreement UMO-2015/19/P/ST6/03998, 
which has received funding from the European Union's Horizon 2020 research and 
innovation programme (Marie Sk\l odowska-Curie grant agreement No.\ 665778).
}
}
\titlerunning{Lossy kernels for connected distance-$r$ domination on nowhere dense graph classes}
\author{Sebastian Siebertz}
\affil{University of Warsaw, Poland,\\ \texttt{siebertz@mimuw.edu.pl}}
\authorrunning{S.\ Siebertz} 
\begin{document}

\maketitle
\begin{abstract}
For $\alpha\colon\N\rightarrow\R$, an $\alpha$-approximate bi-kernel
 is a 
polynomial-time algorithm that takes as input an instance $(I, k)$ of a problem $\QQQ$ and outputs an 
instance $(I',k')$ of a problem $\QQQ'$ of size bounded by a function of $k$ such that, 
for every $c\geq 1$, a $c$-approximate solution for the new instance can be turned 
into a $c\cdot\alpha(k)$-approximate solution of the original instance in polynomial time. This
framework of \emph{lossy kernelization} was recently introduced by Lokshtanov et al.~\cite{lokshtanov2016lossy}. 

We prove that for every nowhere dense class of graphs, 
every $\alpha>1$ and $r\in\N$ there
exists a polynomial $p$ (whose degree depends only on 
$r$ while its coefficients depend on $\alpha$) such that
the connected distance-$r$ dominating set problem 
with parameter $k$ admits an 
$\alpha$-approximate bi-kernel of size $p(k)$.

Furthermore, we show that this result cannot be extended to more general 
classes of graphs which are closed under taking subgraphs 
by showing that if a class $\CCC$ is somewhere dense and 
closed under taking subgraphs, then for some value of $r\in\N$ 
there cannot exist an 
$\alpha$-approximate bi-kernel for the (connected) distance-$r$
dominating set problem on $\CCC$ for any function $\alpha\colon\N\rightarrow\R$ (assuming the Gap Exponential Time Hypothesis). 
\end{abstract}
\section{Introduction}

\subparagraph*{Lossy kernelization.}

A powerful method in parameterized complexity theory is to compute 
on input $(I,k)$ a problem \emph{kernel} in a polynomial time pre-computation step, 
that is, to reduce the input instance in polynomial time to an equivalent instance $(I',k')$ 
of size $g(k)$ for some function $g$ 
bounded in the parameter only. If the reduced instance $(I',k')$ belongs to a different problem than 
$(I,k)$, we speak of a \emph{bi-kernel}. It is well known that a problem is fixed-parameter tractable if and only if it admits a kernel, however, in general the function~$g$ can grow arbitrarily fast. For practical applications we are mainly interested in linear or at worst polynomial kernels. We refer to the textbooks~\cite{cygan2015parameterized,downey2013fundamentals,
downey1999parameterized} for extensive background on parameterized complexity
and kernelization. 

One shortcoming of the above notion of kernelization is that it does not combine well with approximations or heuristics. An approximate solution on the reduced instance 
provides no insight whatsoever about the original instance, the only 
statement we can derive from the definition of a kernel is that the reduced instance
$(I',k')$ is a positive instance if and only if the original instance $(I, k)$ is a positive instance. This issue was recently addressed by Lokstanov et al.~\cite{lokshtanov2016lossy}, who introduced the framework of \emph{lossy kernelization}. Intuitively, the framework combines notions from approximation and
kernelization algorithms to allow for approximation preserving kernels. 

Formally, a \emph{parameterized optimization (minimization or maximization) problem} $\Pi$ over finite vocabulary $\Sigma$ is a computable function $\Pi\colon\Sigma^\star\times \N \times \Sigma^\star\rightarrow \R\cup\{\pm \infty\}$.  A \emph{solution} for an instance $(I,k)\in \Sigma^\star\times\N$ is a string $s\in \Sigma^\star$, such that $|s| \leq  |I| + k$. The \emph{value} of the solution $s$
is $\Pi(I,k,s)$. For a minimization problem, the \emph{optimum value} of an 
instance $(I,k)$ is $\Opt_\Pi(I,k)=\min_{s\in \Sigma^*, |s|\leq |I|+k}\Pi(I,k,s)$, 
for a maximization problem it is $\Opt_\Pi(I,k)=\max_{s\in \Sigma^*, |s|\leq |I|+k}\Pi(I,k,s)$. An \emph{optimal solution} is a solution $s$ with $\Pi(I,k,s)=\Opt_\Pi(I,k)$. If $\Pi$ is clear from the context, we simply write $\Opt(I,k)$. 

A vertex-subset graph problem~$\mathcal{Q}$ defines which subsets of 
the vertices of an input graph are feasible solutions. We consider the following
parameterized minimization problem associated with $\mathcal{Q}$:
\[\mathcal{Q}(G,k,S)=
\begin{cases}
\infty & \text{if $S$ is not a valid solution for $G$ as determined by $\mathcal{Q}$}\\\min\{|S|,k+1\} & \text{otherwise.}
\end{cases}\]
Note that this bounding of the objective function at $k+1$ does not 
make sense for approximation algorithms if one insists on $k$ being the 
unknown optimum solution of the instance $I$. The parameterization 
above is by the value of the solution that we want our algorithms
to output.

\begin{approxpreprocessing}
Let $\alpha\colon\N\rightarrow \R$ be a function and let $\Pi$ be a parameterized 
minimization problem. An \emph{$\alpha$-approximate polynomial time pre-processing algorithm} $\mathcal{A}$ for $\Pi$ is a pair of polynomial time algorithms. 
The first algorithm is called the \emph{reduction algorithm}, and computes a map
$R_\mathcal{A} \colon \Sigma^\star\times \N\rightarrow \Sigma^\star\times \N$. 
Given as input an instance $(I, k)$ of $\Pi$, the reduction algorithm outputs
another instance $(I',k')=R_\mathcal{A}(I,k)$. 
The second algorithm is called the \emph{solution lifting algorithm}. It takes as
input an instance $(I,k)\in \Sigma^\star\times \N$, the output instance $(I',k')$
of the reduction algorithm, and a solution $s'$ to the instance $(I',k')$. 
The solution lifting algorithm works in time polynomial in $|I|,k, |I'|, k'$ and $s'$, 
and outputs a solution $s$ to $(I, k)$ such that 
\begin{align*}
\frac{\Pi(I,k,s)}{\Opt(I,k)}\leq \alpha(k)\cdot \frac{\Pi(I',k',s')}{\Opt(I',k')}. 
\end{align*}
\end{approxpreprocessing}


\begin{approxkernel}
An \emph{$\alpha$-approximate kernelization algorithm} is an $\alpha$-approximate
polynomial time pre-processing algorithm for which we can
prove an upper bound on the size of the output instances in terms of the parameter of the instance to be pre-processed. We speak of a linear or polynomial 
kernel, if the size bound is linear or polynomial, respectively. If we allow
the reduced instance to be an instance of another problem, we speak of 
an \emph{$\alpha$-approximate bi-kernel}.
\end{approxkernel}

We refer to the work of Lokshtanov et al.~\cite{lokshtanov2016lossy}
for an extensive discussion of related work and examples of problems that
admit lossy kernels. 

\vspace{-2mm}
\subparagraph*{Nowhere denseness and domination.}

The notion of nowhere denseness was 
introduced by Ne\v set\v ril and
Ossona de Mendez~\cite{nevsetvril2010first,nevsetvril2011nowhere} as
a general model of \emph{uniform sparseness} of graphs. Many
familiar classes of sparse graphs, like planar
graphs, graphs of bounded tree-width, graphs of bounded degree, 
and all classes that exclude a fixed (topological) minor, are nowhere
dense. An important and related concept is the notion of a graph class of \emph{bounded expansion}, which was also introduced by 
Ne\v set\v ril and
Ossona de Mendez~\cite{nevsetvril2008grad,nevsetvril2008gradb,nevsetvril2008gradc}. 
Before we give the formal definitions, we remark that all graphs in this paper are finite, undirected and simple. We refer to the textbook~\cite{diestel2012graph} for all undefined notation. 

\begin{subdiv}
Let $H$ be a graph and let $r\in \N$. An \emph{$r$-subdivision} of $H$ is obtained by replacing all edges of $H$
by internally vertex disjoint paths of length (exactly) $r$.  We 
write~$H_r$ for the $r$-subdivision of $H$. 
\end{subdiv}

\begin{nd}
  A class $\CCC$ of graphs is \emph{nowhere dense} if there exists a
  function $t\colon \N\rightarrow \N$ such that for
  all $r\in\N$ and for all $G\in \CCC$ we do not find the
  $r$-subdivision of the complete graph $K_{t(r)}$ as a subgraph of $G$. Otherwise, $\CCC$ is called \emph{somewhere dense}.
\end{nd}

Nowhere denseness turns out to be a very robust concept with several
seemingly unrelated natural characterizations.  These include
characterizations by the density of shallow (topological)
minors~\cite{nevsetvril2010first,nevsetvril2011nowhere},
quasi-wideness~\cite{nevsetvril2011nowhere}, low tree-depth
colorings~\cite{nevsetvril2008grad}, generalized coloring
numbers~\cite{zhu2009coloring}, sparse neighborhood
covers~\cite{GroheKRSS15,grohe2014deciding}, by so-called splitter games~\cite{grohe2014deciding} and by the model-theoretic
concepts of stability and independence~\cite{adler2014interpreting}.
%
%
%
%
For extensive background we refer to the textbook
of Ne\v{s}et\v{r}il and Ossona de Mendez~\cite{sparsity}.

\begin{dom}
In the parameterized \emph{dominating set problem} we are given a
graph~$G$ and an integer parameter $k$, and the objective is
to determine the existence of a subset $D\subseteq V(G)$ of size at
most $k$ such that every vertex $u$ of $G$ is \emph{dominated} by
$D$, that is, either $u$ belongs to~$D$ or has a neighbor in~$D$.
More generally, for fixed $r\in \N$, in the \emph{distance-$r$ 
dominating set problem}
we are asked to determine the existence of a subset~$D\subseteq V(G)$ of size at most
$k$ such that every vertex $u\in V(G)$ is within distance at most~$r$
from a vertex of~$D$. In the \emph{connected (distance-$r$) dominating 
set problem} we additionally demand that the (distance-$r$) dominating
set shall be connected. 
\end{dom}

The dominating set problem plays a central role in the theory of
parameterized complexity, as it is a prime example of a
$\mathsf{W}[2]$-complete problem with the size of the optimal solution as the parameter, thus considered intractable in full generality.
For this reason, the (connected) dominating set problem and 
\mbox{distance-$r$} dominating set problem
have been extensively studied on restricted graph classes. 
A particularly fruitful line of research in this area concerns kernelization
for the (connected) dominating set problem~\cite{alber2004polynomial,bodfomlok+09,fomin10,fomin2012linear,FominLST13,philip2012polynomial}. 
For the more general distance-$r$ dominating set problem 
we know the following results. Dawar and Kreutzer~\cite{DawarK09} showed that for every $r\in \N$ and 
every nowhere dense class $\CCC$,
the distance-$r$ dominating set problem is fixed-parameter
tractable on $\CCC$. 
Drange et al.~\cite{drange2016kernelization} gave a linear bi-kernel for distance-$r$ dominating sets on any graph class of bounded expansion for every $r\in \N$,
and a pseudo-linear kernel for dominating sets on any nowhere dense graph class; that is, a kernel of size $\Oof(k^{1+\epsilon})$, where the $\Oof$-notation hides constants depending on $\epsilon$. 
Precisely, the kernelization
algorithm of Drange et al.~\cite{drange2016kernelization} outputs an instance of an annotated problem where some vertices are not required to be dominated; this will be the case in the present paper as well. Kreutzer et al.~\cite{KreutzerPRS16} provided
a polynomial bi-kernel for the distance-$r$ dominating set problem on every
nowhere dense class for every fixed $r\in \N$ and finally, Eickmeyer et al.~\cite{eickmeyer2016neighborhood} could prove the existence of pseudo-linear bi-kernels of size 
$\Oof(k^{1+\epsilon})$, where the $\Oof$-notation hides constants depending on~$r$ and $\epsilon$. 

It is known that bounded expansion classes of graphs are the limit for the 
existence of polynomial kernels for the connected dominating set problem. 
Drange et al.~\cite{drange2016kernelization} gave an example of a 
subgraph-closed class of bounded expansion which does not admit a 
polynomial kernel for connected dominating sets unless $\mathsf{NP}\subseteq \mathsf{coNP/Poly}$. They also showed that 
nowhere dense classes are the limit for the fixed-parameter tractability 
of the distance-$r$ dominating set problem if we assume closure under
taking subgraphs (in the following, classes which are closed under
taking subgraphs will be called \emph{monotone classes}). 


\subparagraph*{Our results.}

We prove that for every nowhere dense class of graphs, 
every $\alpha>1$ and $r\in\N$ there
exists a polynomial $p$ (whose degree depends only on 
$r$ while its coefficients depend on~$\alpha$) such that
the connected distance-$r$ dominating set problem with
parameter $k$ admits an 
$\alpha$-approximate bi-kernel of size $p(k)$.
Our result extends an earlier result by Eiben et al.~\cite{eiben2017}, who
proved that the connected dominating set problem admits $\alpha$-approximate
bi-kernels of linear size on classes of bounded expansion. Note that 
due to the before mentioned hardness result of connected dominating
set on classes of bounded expansion we cannot expect to obtain 
an $\alpha$-approximate bi-kernel of polynomial size for $\alpha=1$, 
as this lossless bi-kernel would in particular imply the existence of a
polynomial bi-kernel for the problem. However, our proof can easily be adapted to provide 
$\alpha$-approximate bi-kernels for $\alpha=1$ for the
distance-$r$ dominating set problem. 

Our proof follows the approach of 
Eiben et al.~\cite{eiben2017} for connected dominating set $(r=1)$ on 
classes of bounded expansion. 
First, we compute a small set 
$Z\subseteq V(G)$ of 
vertices, called a \emph{$(k,r)$-domination core}, such that every
set of size at most~$k$ which $r$-dominates $Z$ will also be 
a distance-$r$ dominating set of $G$. 
The existence of a $(k,r)$-domination core on nowhere 
dense graph classes of size 
polynomial in $k$ was recently proved by Kreutzer et al.~\cite{siebertz2016polynomial}. We remark that the notion 
of a $c$-exchange domination core for a constant $c$, 
which was used by
Eiben et al.~\cite{eiben2017}, cannot be applied in the
nowhere dense setting, as the constant
$c$ must be chosen in relation
to the edge density of shallow subdivisions, an invariant that
can
be unbounded in nowhere dense classes.

Having found a domination core of size polynomial in $k$,  
the next step is to reduce the number of dominators, i.e.~vertices whose 
role is to dominate other vertices, and the
number of connectors, i.e.~vertices whose role is to connect the solution. We apply the techniques of Eiben et al.~\cite{eiben2017} based on approximation techniques
for the Steiner Tree problem. The main difficulty at this point is to find a 
polynomial bounding the size of the lossy kernel whose degree
is independent of $\alpha$. 

Finally, we prove that this result cannot be extended to more general 
classes of graphs which are monotone
by showing that if a class $\CCC$ is somewhere dense and 
monotone, then for some value of $r\in\N$ 
there cannot exist an 
$\alpha$-approximate bi-kernel for the (connected) distance-$r$
dominating set problem on $\CCC$ for any function $\alpha\colon\N\rightarrow\R$ (assuming the Gap Exponential Time Hypothesis). 
These lower bounds are based on an equivalence between 
FPT-approximation algorithms and approximate kernelization
which is proved in~\cite{lokshtanov2016lossy} and a
result of Chalermsook et al.~\cite{chalermsook17}
stating that FPT-$\alpha(k)$-approximation algorithms 
for the dominating set problem do not exist for any function 
$\alpha$ (assuming the Gap Exponential Time Hypothesis). 

\vspace{-5mm}
\subparagraph*{Organization.}
This paper is organized as follows. In \Cref{sec:kernel} and 
\Cref{sec:tree-closure} we 
prove our positive results. We have split the proof into 
one part which requires no knowledge of nowhere dense 
graph classes and which is proved in \Cref{sec:kernel}. 
In the proof we assume just one lemma which contains
the main technical contribution of the paper and which 
requires more background from nowhere dense graphs. 
The lemma is proved in \Cref{sec:tree-closure}. 
In \Cref{sec:lower-bounds} we prove our lower bounds. 
\section{Building the lossy kernel}\label{sec:kernel}

Our notation is standard, we refer to the textbook~\cite{diestel2012graph} for all undefined notation. 
In the following, we fix a nowhere dense class $\CCC$ of 
graphs, $k,r\in \N$ and $\alpha>1$. Furthermore, let
$t= \frac{\alpha-1}{4r+2}$. As we deal with the connected
distance-$r$ dominating set problem we may assume
that all graphs in $\CCC$ are connected. 

\begin{domcore}
Let $G$ be a graph. A set $Z\subseteq V(G)$ is a \emph{$(k,r)$-domination core} for $G$ if every set $D$ of size at most $k$ that $r$-dominates $Z$ also $r$-dominates $G$ 
\end{domcore}

Domination cores of polynomial size exist for nowhere dense
classes, as the following lemma shows. 

\begin{lemma}[Kreutzer et al.~\cite{siebertz2016polynomial}]
\label[lemma]{lem:findcore1}
There exists a polynomial $q$ (of degree depending only on~$r$) and a polynomial time algorithm
  that, given a graph $G\in\CCC$ and $k\in\N$
either correctly concludes that $G$ cannot be $r$-dominated by a 
set of at most $k$ vertices, or finds a $(k,r)$-domination core $Z\subseteq V(G)$ of $G$ of size at most $q(k)$. 
\end{lemma}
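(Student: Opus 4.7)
The plan is to construct the $(k,r)$-domination core $Z$ by an iterative shrinking procedure. Initialize $Z := V(G)$, which is trivially a $(k,r)$-domination core, and repeatedly remove ``redundant'' vertices until $|Z|$ falls below the target polynomial bound $q(k)$. The main task is to prove that whenever $|Z|>q(k)$ a redundant vertex exists and can be identified in polynomial time; the procedure then terminates after at most $|V(G)|$ iterations, each polynomial, giving an overall polynomial-time algorithm. Before starting the shrinking, I would also run an auxiliary check to see whether $G$ can possibly be $r$-dominated by a set of size $\leq k$, reporting failure if not.

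Formally, I would call a vertex $v\in Z$ \emph{redundant} with respect to the current $Z$ if every set $D\subseteq V(G)$ with $|D|\leq k$ that $r$-dominates $Z\setminus\{v\}$ also $r$-dominates $v$. Removing a redundant vertex preserves the domination-core property: any $D$ of size at most $k$ that $r$-dominates the smaller set $Z\setminus\{v\}$ automatically $r$-dominates all of $Z$, and since $Z$ was previously a $(k,r)$-domination core, such a $D$ $r$-dominates all of $G$. The obligation is therefore to establish that the set of redundant vertices cannot become empty until $|Z|$ has shrunk to $q(k)$.

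For each $v\in V(G)$ and $A\subseteq V(G)$, let $M_r(v,A)$ denote the \emph{$r$-projection} of $v$ onto $A$, that is, the set of $a\in A$ with $\dist(v,a)\leq r$. The essential structural input from nowhere denseness, ultimately a consequence of the sub-polynomial bound on the weak coloring number $\wcol_{2r}$ on graphs from $\CCC$, is that one can compute in polynomial time a \emph{projection closure} $S\subseteq V(G)$ of size polynomial in $|A|$ (of degree depending only on~$r$), such that after deleting $S$ the number of distinct projections $M_r(v,A)$ of size at most $k$ realized by the remaining vertices is polynomial in $|A|$ and $k$. Applied to $A=Z$, if $|Z|$ is larger than a suitable polynomial in $k$, pigeonhole on projection types yields a ``bucket'' of many vertices $v_1,\dots,v_m\in Z\setminus S$ sharing a common projection onto $Z\cup S$, together with a candidate index $v_i$ for removal.

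With such a bucket in hand, the redundancy of some $v_i$ is established by an exchange argument: if a putative $D$ of size $\leq k$ dominated $Z\setminus\{v_i\}$ but not $v_i$, then because the other $v_j$ are dominated by $D$ via the same projection structure, one of the dominators responsible for them could be swapped for a vertex inside $S$ close to $v_i$ to yield a set of size $\leq k$ that $r$-dominates all of $Z$, contradicting the existence of the counterexample $D$. Iterating this shrinking step drives $|Z|$ down to $q(k)$. The hard part is the structural ingredient itself: in bounded-expansion classes one has a constant-radius ``$c$-closure'' for a single $c$, but in nowhere dense classes $c$ must be allowed to depend on $|Z|$ (growing slower than any polynomial), so the closure has to be constructed in layers combined with the quasi-wideness characterization of $\CCC$. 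This layered construction is exactly what forces the degree of $q$ to depend on $r$, while its coefficients absorb the nowhere-dense parameters.
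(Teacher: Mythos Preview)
The paper does not prove this lemma: it is quoted verbatim as a result of Kreutzer, Rabinovich, and Siebertz~\cite{siebertz2016polynomial} and used as a black box, with only the remark that the bounds can be made constructive via~\cite{pilipczuk2017wideness}. There is therefore no ``paper's own proof'' to compare against.

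That said, your outline does follow the general strategy of the cited source (iterative shrinking of $Z$, uniform quasi-wideness to extract a large $2r$-independent set after deleting a small $S$, then pigeonhole on $r$-projection profiles onto $S$). One point in your sketch is not right, however: the exchange step as you phrase it does not yield the required conclusion. You argue that if some $D$ with $|D|\le k$ $r$-dominates $Z\setminus\{v_i\}$ but not $v_i$, then one can \emph{swap} a dominator to obtain a new set $D'$ of size $\le k$ that $r$-dominates all of $Z$, ``contradicting the existence of the counterexample~$D$''. But producing such a $D'$ does not contradict anything: what you must show is that $Z\setminus\{v_i\}$ is still a $(k,r)$-domination core, i.e.\ that the \emph{original} $D$ $r$-dominates $G$, and exhibiting a different $D'$ that does so is irrelevant.

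The actual argument shows directly that $D$ already $r$-dominates $v_i$. Since the $v_j$ are pairwise at distance $>2r$ in $G-S$, each $d\in D$ can $r$-dominate at most one $v_j$ along an $S$-avoiding path; hence if $m>k+1$ some $v_j$ with $j\ge 2$ is $r$-dominated by $D$ only via a path through $S$. Decomposing that path at its first vertex $s\in S$ and using that $v_i$ and $v_j$ share the same $r$-projection profile onto $S$ (so the $S$-avoiding distance from $s$ to $v_i$ equals that to $v_j$) gives a path of length $\le r$ from the same $d\in D$ to $v_i$. Thus $D$ $r$-dominates $v_i$, hence all of $Z$, hence $G$. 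No swap is needed, and this is where the core property is actually preserved.
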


We remark that the non-constructive
polynomial bounds that follow from~\cite{siebertz2016polynomial}
can be replaced by much improved constructive bounds~\cite{pilipczuk2017wideness}. 

\medskip
We will work with the following parameterized
minimization variant of the connected distance-$r$ dominating set
problem. 

\[\textsc{CDS}_r(G,k,D)=
\begin{cases}
\infty & \text{if $D$ is not a connected distance-$r$}\\
&\quad \text{dominating set of $G$}\\\min\{|D|,k+1\} & \text{otherwise.}
\end{cases}\]

As indicated earlier, we compute only a bi-kernel and reduce
to the following annotated version of the connected
distance-$r$ dominating set problem. 

\[\textsc{ACDS}_r((G,Z),k,D)=
\begin{cases}
\infty & \text{if $D$ is not a connected distance-$r$}\\
&\quad \text{dominating set of $Z$ in $G$}\\\min\{|D|,k+1\} & \text{otherwise.}
\end{cases}\]

\medskip

The following lemma is 
folklore for dominating sets, its more general variant
for distance-$r$ domination is proved just as the 
case $r=1$ (see e.g.~Proposition 1 of~\cite{eiben2017} 
for a proof for the case $r=1$). 

\begin{lemma}\label[lemma]{lem:ds-cds}
Let $G$ be a graph, $Z\subseteq V(G)$ a connected set in $G$ and 
let $D$ be a distance-$r$ dominating set for $Z$ such that $G[D]$ has at most 
$p$ connected components. Then we can compute in polynomial time 
a set $Q$ of size at most
$2rp$ such that $G[D \cup Q]$ is connected.
\end{lemma}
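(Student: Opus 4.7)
The plan is to build an auxiliary graph $H$ on the connected components of $G[D]$ whose edges are witnessed by short paths passing through $Z$, and then take a spanning tree of $H$ and add the internal vertices of the witnessing paths to form $Q$. First I would list the components $C_1,\ldots,C_p$ of $G[D]$ and, for each $z\in Z$, pick a closest vertex $d(z)\in D$ (at distance at most $r$, since $D$ is a distance-$r$ dominating set for $Z$) and let $f(z)$ be the component containing $d(z)$, obtaining a coloring $f\colon Z\to\{C_1,\ldots,C_p\}$.

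Next I would define $H$ on vertex set $\{C_1,\ldots,C_p\}$ with an edge $C_iC_j$ whenever there exist adjacent $z,z'\in Z$ with $f(z)=C_i\neq C_j=f(z')$. Each such edge of $H$ is witnessed by a walk in $G$ obtained by prepending a shortest $d(z)$-to-$z$ path (of length at most $r$), traversing the edge $zz'$, and appending a shortest $z'$-to-$d(z')$ path (of length at most $r$); this walk has length at most $2r+1$ and therefore contains at most $2r$ internal (non-$D$) vertices. Connectivity of $H$ follows from connectivity of $Z$: any path inside $Z$ joining two of its vertices induces, by recording each color change $f(z_m)\neq f(z_{m+1})$, a walk in $H$ between the corresponding components, so every component in the image of $f$ lies in a single component of $H$.

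I would then pick a spanning tree $T$ of $H$, with at most $p-1$ edges, and let $Q$ be the union of the internal vertices of the witness walks for the edges of $T$. This yields $|Q|\le 2r(p-1)\le 2rp$, and the walks link all the components of $G[D]$ into a single component, so $G[D\cup Q]$ is connected. The entire procedure runs in polynomial time: the representatives $d(z)$ come from a single multi-source BFS rooted at $D$, the graph $H$ is assembled by scanning the edges inside $G[Z]$, and any standard spanning-tree algorithm finishes the job.

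The one point requiring care, and the main (mild) obstacle, is the treatment of components of $G[D]$ that do not appear in the image of $f$: such a component is not the nearest $D$-component of any vertex of $Z$ and hence contributes nothing to $r$-dominating $Z$. The standard folklore convention, implicit in how the lemma is applied, is to trim such redundant components from $D$ before invoking the lemma; this restores surjectivity of $f$, so that the spanning-tree argument above connects every component of the (possibly reduced) $D$ while adding at most $2rp$ new vertices.
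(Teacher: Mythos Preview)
The paper does not supply its own proof; it declares the statement folklore and points to Proposition~1 of Eiben et al.\ for the case $r=1$, and your argument (assign each $z\in Z$ to its nearest component of $G[D]$, build an auxiliary graph on the components from colour changes along edges of $G[Z]$, take a spanning tree, and keep the internal vertices of the witnessing walks) is exactly that standard argument. Your closing caveat is not merely prudent but necessary: as literally stated the lemma is false (take $G$ a long path, $Z$ one endpoint, $D$ both endpoints, $r=1$, so $p=2$ yet connecting requires $|Q|=|V(G)|-2$), and the trimming you describe is precisely how the statement must be read in the paper's two applications (in the first $Z=V(G)$, so every component of $G[D]$ is its own $f$-image; in the second, discarding redundant components only shrinks $D'$ and preserves $r$-domination of $Z$).
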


\medskip
The lemma implies that we may assume that our
domination cores are connected. 

\begin{corollary}\label[corollary]{lem:findcore}
There exists a polynomial $q$ (of degree depending
only on $r$) and a polynomial time algorithm
that, given a graph $G\in\CCC$ and $k\in\N$
either correctly concludes that~$G$ cannot be $r$-dominated by a 
set of at most $k$ vertices, or finds a 
$(k,r)$-domination core $Z\subseteq V(G)$ of~$G$ of size at most $q(k)$ such that $G[Z]$ is connected. 
\end{corollary}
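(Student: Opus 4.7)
The plan is to enlarge the (possibly disconnected) domination core $Z_0$ provided by \Cref{lem:findcore1} with a polynomial number of auxiliary vertices chosen so that the resulting set induces a connected subgraph of $G$. The crucial observation is that enlarging a $(k,r)$-domination core preserves the core property: if $Z_0\subseteq Z$ and some $D$ of size at most $k$ $r$-dominates $Z$, then $D$ in particular $r$-dominates $Z_0$, hence $r$-dominates $G$. Therefore it suffices to produce a connected superset $Z\supseteq Z_0$ of polynomial size.

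First, I apply \Cref{lem:findcore1} to obtain $Z_0$ with $|Z_0|\le q_0(k)$, or conclude that $G$ has no $r$-dominating set of size at most $k$; here $q_0$ denotes the polynomial from \Cref{lem:findcore1}, whose degree depends only on $r$. Second, I compute an $r$-dominating set $D$ of $G$ of polynomial size by running the greedy set-cover algorithm on the instance whose universe is $Z_0$ and whose covering sets are $\{u\in Z_0 : \dist_G(u,v)\le r\}$ for $v\in V(G)$. If $G$ admits an $r$-dominating set of size at most $k$, this set-cover instance has optimum at most $k$, so greedy returns a solution $D$ with $|D|\le k(1+\ln q_0(k))$; by the core property of $Z_0$, this $D$ is in fact an $r$-dominating set of $G$. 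If greedy outputs more vertices, I reject.

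Third, since $G$ is connected I invoke \Cref{lem:ds-cds} with $Z := V(G)$ and the greedy set $D$ above to obtain a set $Q$ with $|Q|\le 2r|D|$ such that $G[D\cup Q]$ is connected. Fourth, for every $z\in Z_0$, because $D$ is an $r$-dominating set of $G$, there is a vertex $d\in D$ with $\dist_G(z,d)\le r$; I compute a shortest $z$--$d$ path $P_z$ by BFS, contributing at most $r$ new vertices per $z$. I then define
\[
Z \;:=\; Z_0 \cup D \cup Q \cup \bigcup_{z\in Z_0} V(P_z).
\]
By construction $Z\supseteq Z_0$, so $Z$ is a $(k,r)$-domination core. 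The induced subgraph $G[Z]$ is connected, because $G[D\cup Q]$ is a connected backbone and every $P_z$ attaches $z$ to this backbone. Moreover $|Z|\le q_0(k)+(2r+1)\,k\,(1+\ln q_0(k))+r\cdot q_0(k)$, a polynomial in $k$ whose degree is inherited from $q_0$ and therefore depends only on $r$.

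The main conceptual hurdle is that one cannot directly connect the components of $G[Z_0]$, since pairwise distances inside $Z_0$ have no a priori bound in $G$ and there can be up to $q_0(k)$ such components. The construction above bypasses this obstacle by routing every vertex of $Z_0$ through a polynomial-size $r$-dominating backbone, which is itself cheaply connected via \Cref{lem:ds-cds}; the quantitative point is that each attachment path has length at most $r$ and the backbone costs only $(2r{+}1)|D|$ vertices, so the total size stays polynomial in $k$ with degree controlled by $r$ alone.
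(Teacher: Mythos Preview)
Your argument contains a genuine gap at the step where you claim ``by the core property of $Z_0$, this $D$ is in fact an $r$-dominating set of $G$.'' The definition of a $(k,r)$-domination core only guarantees that sets \emph{of size at most $k$} which $r$-dominate $Z_0$ also $r$-dominate $G$. Your greedy solution $D$ has size up to $k(1+\ln q_0(k))$, which in general exceeds $k$, so the core property does not apply to it and you are not entitled to conclude that $D$ $r$-dominates all of $V(G)$. Consequently your invocation of \Cref{lem:ds-cds} in step~3 with $Z=V(G)$ is unjustified, since that lemma requires $D$ to be a distance-$r$ dominating set for $Z$. The later steps --- forming the backbone and attaching each $z\in Z_0$ by a path of length at most $r$ (for which, incidentally, you only need that $D$ $r$-dominates $Z_0$, which holds by construction) --- are sound \emph{provided} the backbone $G[D\cup Q]$ is connected, but that connectedness rests exactly on the broken claim.

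The paper's proof avoids this detour entirely. Rather than computing an auxiliary dominator $D$, it observes that either some vertex $v$ lies at distance greater than $2r$ from $Z_0$, in which case one may reject (any $r$-dominator $A$ of $Z_0$ with $|A|\le k$ can be replaced by $A\cap N_r[Z_0]$, which still $r$-dominates $Z_0$ and hence, by the core property, all of $G$, yet cannot reach $v$ --- a contradiction), or else $Z_0$ itself is a distance-$2r$ dominating set of $V(G)$. In the latter case \Cref{lem:ds-cds} applies directly with $D=Z_0$ and radius $2r$, giving a connected superset of $Z_0$ of size $O(r)\cdot q_0(k)$. Your approach can be repaired by inserting the same distance-$2r$ check: once every vertex is within $2r$ of $Z_0$, your greedy $D$ (which $r$-dominates $Z_0$) automatically $3r$-dominates $V(G)$, and \Cref{lem:ds-cds} then applies with radius $3r$. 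But at that point the greedy step and the attachment paths are redundant, since $Z_0$ already serves as the backbone.
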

\begin{proof}
Assume that when applying \Cref{lem:findcore1}, a 
$(k,r)$-domination core $Y$ is returned, otherwise we
return that no distance-$r$ dominating set of size at most
$k$ exists. 

First observe that every superset $X\supseteq Y$
is also a $(k,r)$-domination core of $G$ (every set of size at most $k$ which $r$-dominates $X$ in particular $r$-dominates $Y$, and
hence all of $G$). 

Assume there is a vertex $v\in V(G)$ with distance greater 
than $2r$ from $Y$. Since $Y$ is a $(k,r)$-domination core, 
every set of size at most $k$ that $r$-dominates $Y$ also $r$-dominates
$G$. If there exists a distance-$r$ dominator $A$ of $Y$ 
of size at most $k$, also $B=N_r[Y]\cap A$ 
(the intersection of~$A$ with the
closed $r$-neighborhood of $Y$) is a distance-$r$ dominator of 
$Y$ of size at most $k$. However, as $v$ has distance 
greater than $2r$ from $Y$, $B$ cannot be a distance-$r$
dominating set of $G$. Hence, if there is $v\in V(G)$ with
distance greater than $2r$ from $Y$, we may return 
that~$G$ cannot be $r$-dominated by a set of at most $k$
vertices. Otherwise, it follows that $Y$ is a distance-$2r$ 
dominating set of $G$. We can hence apply
\Cref{lem:ds-cds} with parameters $Z=V(G)$ (we assume
that all graphs $G\in\CCC$ are connected) and $D=Y$
to find a connected set $X\supseteq Y$ of size at most
$(2r+1)\cdot q(k)$ which is a connected $(k,r)$-domination core. 
\end{proof}

The key idea is to split connected dominating sets into 
parts of well controlled size. This idea will be realized by
considering covering families, defined as follows. 

\begin{coveringfam}
Let $G$ be a connected graph. A \emph{$(G, t)$-covering family}
is a family $\mathcal{F}(G,t)$ of subtrees of $G$ such that 
for each $T\in \mathcal{F}(G,t)$, $|V(T)|\leq 2t$
and $\bigcup_{T\in \mathcal{F}(G,t)}V(T)=V(G)$. 
\end{coveringfam}

\begin{lemma}[Eiben et al.~\cite{eiben2017}]\label[lemma]{lem:cover}
Let $G$ be a connected graph. There is a
$(G,t)$-covering family $\mathcal{F}(G, t)$ with $|\mathcal{F}(G,t)|\leq 
|V(G)|/t +1$, and $\sum_{T\in \mathcal{F}(G,t)} |V(T)|\leq (1+1/t)\cdot |V(G)|+1$. 
\end{lemma}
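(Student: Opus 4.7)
The plan is to reduce to a spanning tree and then decompose it greedily via a postorder traversal. Since $G$ is connected, I would first fix a spanning tree $T$ of $G$; every subtree of $T$ is a subtree of $G$, so it suffices to build the covering family inside $T$. I would then root $T$ at an arbitrary vertex~$r$ and process the vertices in postorder, maintaining for each vertex $v$ a \emph{pending subtree} $P_v$ of $T$ rooted at $v$. Initially $P_v := \{v\}$; when visiting $v$, I attach the surviving pending subtrees of its children one at a time, and immediately after each attachment, if $|P_v| > t$, I emit the current $P_v$ into $\FFF(G,t)$ and reset $P_v := \{v\}$. Because the reset keeps $v$ itself in the pending, $v$ may end up shared between the emitted piece and later pieces. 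After all children of $v$ have been processed, $P_v$ is passed up to the parent of $v$ unchanged; at the root $r$ the final $P_r$ is emitted as the last piece.

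The key invariant I would establish by induction along the postorder is that at the end of each vertex's processing the surviving pending subtree has size at most $t$: otherwise the last attachment that pushed the size past $t$ would already have triggered a cut. Given this invariant, each emitted cut piece has size at most $2t$, because immediately before the triggering attachment the pending had size at most $t$ and the attached child's pending had size at most $t$ by the invariant applied to the child. Combined with the rule that a cut fires only once $|P_v|$ exceeds $t$, every emitted cut piece has size in $(t, 2t]$, while the final piece at the root has size at most $t$; both satisfy the $\leq 2t$ constraint of the covering family.

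Two arithmetic bookkeeping steps finish the proof. Each edge of $T$ is inserted into some pending subtree exactly once (when the incident child is attached to its parent) and then emitted with exactly one output piece, so letting $p$ denote the number of cut pieces and $|\FFF(G,t)| = p+1$ the total,
\[
\sum_{T' \in \FFF(G,t)} |V(T')| \;=\; \sum_{T' \in \FFF(G,t)} \bigl(|E(T')| + 1\bigr) \;=\; \bigl(|V(G)| - 1\bigr) + |\FFF(G,t)|.
\]
Since each of the $p$ cut pieces has at least $t+1$ vertices, this identity gives $p(t+1) \leq |V(G)| + p$, hence $pt \leq |V(G)|$ and $|\FFF(G,t)| \leq |V(G)|/t + 1$. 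Plugging this back into the identity yields $\sum_{T'}|V(T')| \leq (1+1/t)\cdot |V(G)|$, which is within the stated bound.

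The most delicate point will be verifying the pending-size invariant $|P_v| \leq t$ after $v$'s processing, because a single vertex $v$ can serve as the cut point of several consecutive cuts (one for each child whose attachment pushes the size past~$t$), and after each reset the remaining children still have to be attached correctly. Once the invariant is in place the rest is a routine counting calculation, and degenerate cases, such as $|V(G)| \leq t$ where the algorithm never cuts and outputs a single piece, or the non-integer values of~$t$ which may be handled by replacing $t$ by $\lfloor t \rfloor$, only strengthen the bounds.
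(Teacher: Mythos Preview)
The paper does not supply its own proof of this lemma; it is quoted as a black box from Eiben et al.~\cite{eiben2017}, so there is nothing in the paper to compare against. Your argument---take a spanning tree, traverse in postorder, greedily cut off the pending subtree once its size exceeds $t$---is the standard construction for this kind of statement and is correct. The invariant $|P_v|\le t$ after each vertex is processed holds for exactly the reason you give, which bounds every emitted piece by $2t$; the edge-counting identity $\sum_{T'\in\FFF}|E(T')|=|V(G)|-1$ then gives $\sum_{T'}|V(T')|=|V(G)|+p$, from which both stated bounds follow directly.

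One small slip in your closing remarks: replacing a non-integer $t$ by $\lfloor t\rfloor$ does \emph{not} ``only strengthen the bounds''. It tightens the per-piece size bound ($2\lfloor t\rfloor\le 2t$) but \emph{loosens} the bound on $|\FFF(G,t)|$, since $|V(G)|/\lfloor t\rfloor\ge |V(G)|/t$. This is cosmetic rather than a genuine gap---the lemma is naturally read for integer $t$, and in the paper's application the real value $t=(\alpha-1)/(4r+2)$ can be rounded with no effect on the downstream estimates---but the sentence as written is not correct.
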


To recombine the pieces we will solve instances of the 
\textsc{(Group) Steiner Tree} problem.

\begin{steiner}
Let $G$ be a graph and let $Y\subseteq V(G)$ be a set of \emph{terminals}.
A \emph{Steiner tree} for~$Y$ is a subtree of $G$ spanning $Y$. 
We write $\mathbf{st}_G(Y)$ for the order of (i.e.\ 
the number of vertices of) the smallest Steiner tree for
$Y$ in $G$ (including the vertices of $Y$).

Let $G$ be a graph and let $\YYY=\{V_1,\ldots, V_s\}$ 
be a family of vertex disjoint subsets of $G$. A \emph{group Steiner tree} for $\YYY$ is a subtree of $G$ that contains (at least) one 
vertex of each group~$V_i$. We write $\mathbf{st}_G(\YYY)$ for the order of the smallest group Steiner tree for $\YYY$.
\end{steiner}

\smallskip
When recombining the pieces, we have to preserve their 
domination properties. For this, 
we will need precise a description of how vertices interact 
with the domination core.

\begin{avoidingpath}
Let $G$ be a graph and let $A\subseteq V(G)$ be a subset of vertices. For vertices $v\in A$ and $u\in V(G)\setminus A$, a path $P$ connecting $u$ and $v$ is called {\em{$A$-avoiding}}
if all its vertices apart from $v$ do not belong to $A$. 
\end{avoidingpath}

\begin{projprofile}
The {\em{$r$-projection}} of a vertex $u\in V(G)\setminus A$ onto~$A$, denoted $M^G_r(u,A)$ is the set of all vertices $v\in A$ that
can be connected to $u$ by an $A$-avoiding path of length at most $r$. The {\em{$r$-projection profile}} of a vertex $u\in V(G)\setminus A$ on $A$ is a function $\rho^G_r[u,A]$ mapping vertices of
$A$ to $\{0,1,\ldots,r,\infty\}$, defined as follows: for every $v\in A$, the value $\rho^G_r[u,A](v)$ is the length of a shortest $A$-avoiding path connecting $u$ and~$v$, and~$\infty$ in case this length
is larger than $r$. We define 
\[\projprof_r(G,A)=|\{\rho_r^G[u,A]\colon u\in V(G)\setminus A\}|\]
to be the number of different $r$-projection profiles realized on $A$. 
\end{projprofile}

\begin{lemma}[Eickmeyer et al.~\cite{eickmeyer2016neighborhood}]\label[lemma]{lem:projection-complexity}
There is a function $\fproj$ such that for every
$G\in \CCC$, vertex subset $A\subseteq V(G)$, and
  $\epsilon>0$ 
  we have $\projprof_r(G,A)\leq \fproj(r,\epsilon)\cdot |A|^{1+\epsilon}$.
\end{lemma}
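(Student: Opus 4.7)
The proof will rest on the characterization of nowhere dense classes via Zhu's generalized coloring numbers: for every $r\in\N$ and every $\epsilon'>0$ there is a constant $c(r,\epsilon')$ such that every $G\in \CCC$ admits a linear vertex ordering $\sigma$ with $\wcol_r(G,\sigma) \le c(r,\epsilon')\cdot |V(G)|^{\epsilon'}$. First I would rearrange $\sigma$ so that $A$ occupies the initial segment; moving a vertex earlier in the ordering can only shrink its set of weakly reachable vertices, so this affects the weak coloring number by at most a constant factor (in $r$).

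The key structural observation is that, once $A$ is a prefix of $\sigma$, the $r$-projection profile of any $u\in V(G)\setminus A$ is completely determined by the set $W_u := \WReach_r(G,\sigma,u)\cap A$ together with the restriction of $\rho^G_r[u,A]$ to $W_u$. Indeed, any $A$-avoiding path of length at most $r$ from $u$ to some $v\in A$ has all internal vertices outside $A$, and they therefore have larger $\sigma$-rank than $v\in A$; so $v$ witnesses itself as being weakly $r$-reachable from $u$. Hence $M_r^G(u,A)\subseteq W_u$, and the profile is a function $W_u\to\{0,\dots,r,\infty\}$.

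To turn the pointwise bound $|W_u|\le \wcol_r(G,\sigma)$ into a bound in terms of $|A|$ rather than $|V(G)|$, I would collapse the vertices of $V(G)\setminus A$ sharing the same profile into a single representative, obtaining a subgraph $G^\star$ on at most $|A|+\projprof_r(G,A)$ vertices in which every $r$-projection profile on $A$ realized in $G$ is realized exactly once. Since nowhere denseness is preserved under taking subgraphs, the weak coloring bound applies to $G^\star$ with the same function $c(r,\epsilon')$. Counting the admissible pairs (back-set, distance labelling) then yields a self-referential inequality of the form
\[
\projprof_r(G,A) \;\le\; \binom{|A|}{B}\cdot (r+2)^{B},\qquad B=c(r,\epsilon')\cdot \bigl(|A|+\projprof_r(G,A)\bigr)^{\epsilon'}.
\]
Choosing $\epsilon'$ sufficiently small in terms of $\epsilon$ and $r$, and solving this inequality for $\projprof_r(G,A)$, produces the desired bound $\fproj(r,\epsilon)\cdot |A|^{1+\epsilon}$.

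The main obstacle is precisely this fixed-point balancing: the counting naturally produces an expression involving $|V(G^\star)|$, which itself grows with $\projprof_r(G,A)$, so the auxiliary parameter $\epsilon'$ must be tuned very carefully against $\epsilon$ to separate the two sides. A secondary technicality is to justify that the collapsing step can be implemented so that profiles are genuinely preserved—this is cleanest if we merely \emph{delete} all but one representative per profile class, rather than identify vertices, because deletion cannot create new $A$-avoiding paths and hence cannot disturb the profiles of the remaining vertices.
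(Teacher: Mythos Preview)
The paper does not prove this lemma; it is quoted from Eickmeyer et al.\ and used as a black box, so there is no in-paper argument to compare against. Evaluating your proposal on its own merits, I see two genuine gaps.

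\textbf{The deletion step is unsound.} Your justification that ``deletion cannot create new $A$-avoiding paths and hence cannot disturb the profiles of the remaining vertices'' guards against the wrong direction. Deletion can \emph{destroy} $A$-avoiding paths, which increases distances and therefore changes profiles. Concretely, take $A=\{a\}$, $r=2$, and let $G$ consist of two internally disjoint length-$2$ paths from $a$: one through $x$ to $u$, the other through $y$ to $v$. The profile classes are $\{x,y\}$ (profile $a\mapsto 1$) and $\{u,v\}$ (profile $a\mapsto 2$). If you happen to pick $x$ and $v$ as representatives and delete $y$ and $u$, then $v$ becomes disconnected from $a$ in $G^\star$ and its profile collapses to $a\mapsto\infty$. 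So distinct $G$-profiles need not survive as distinct $G^\star$-profiles. The standard repair---retain, for each representative $u$, one shortest $A$-avoiding path to every vertex of $M_r^G(u,A)$---fixes this, at the cost of enlarging $G^\star$ by up to $r\cdot|A|$ vertices per representative.

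\textbf{The self-referential inequality does not close.} Even granting a corrected $G^\star$, the estimate $\projprof_r(G,A)\le \binom{|A|}{B}(r+2)^{B}$ with $B=\wcol_r(G^\star)\le c(r,\epsilon')\,|V(G^\star)|^{\epsilon'}$ is essentially $|A|^{B}$. Taking logarithms gives an inequality of the shape $\ln P \le c'\,(|A|+P)^{\epsilon'}\ln|A|$. Because $\ln P = o(P^{\epsilon'})$ for every fixed $\epsilon'>0$, this inequality is satisfied by \emph{arbitrarily large} $P$ once $|A|$ is moderately big; it therefore yields no polynomial bound at all, and no choice of $\epsilon'$ in terms of $\epsilon$ can rescue it. A sanity check already in the bounded-expansion regime, where $B$ is a genuine constant $c$, confirms this: your count gives $|A|^{c}$, whereas the known bound there is \emph{linear} in $|A|$---so the pair-counting $(\text{back-set},\text{labelling})$ is missing essential structure. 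The argument in the cited paper does not count such pairs directly; it first passes to a closure $A'\supseteq A$ with $|A'|\le f(r,\epsilon)\,|A|^{1+\epsilon}$ such that every vertex outside $A'$ has projection onto $A'$ of \emph{constant} size, and then shows that each $A$-profile of a vertex outside $A'$ is determined by a bounded amount of data anchored at $A'$. Your weak-reachability observation (that $M_r^G(u,A)\subseteq \WReach_r[G,\sigma,u]\cap A$ when $A$ is a prefix) is correct and is one ingredient there, but it cannot by itself replace the closure step.
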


The following lemma is immediate from the definitions. 
\begin{lemma}\label[lemma]{lem:dswithproj}
Let $G$ be a graph and let $X\subseteq V(G)$. Let $D$ be a distance-$r$ dominating set of~$X$. Then every set $D'$ such
that for each $u\in D$ there is $v\in D'$ with 
$\rho_r^G[u,X]=\rho_r^G[v,X]$ is a distance-$r$ dominating 
set of $X$. 
\end{lemma}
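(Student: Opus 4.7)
My plan is to verify the statement directly from the definitions. Fix an arbitrary $w \in X$; I need to produce some $v \in D'$ with $\dist_G(v,w) \leq r$. Since $D$ is a distance-$r$ dominating set of $X$, I start from some $u \in D$ with $\dist_G(u,w) \leq r$. Because the projection profile $\rho_r^G[\cdot,X]$ is only defined on $V(G) \setminus X$, I work under the harmless convention that $D, D' \subseteq V(G) \setminus X$; vertices of $D$ that happen to lie in $X$ can simply be copied verbatim into $D'$ and dominate themselves.

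The main step is to bridge the gap that $r$-domination talks about arbitrary shortest paths in $G$, while projection profiles only see $X$-avoiding paths. I would fix any shortest path $P$ from $u$ to $w$ in $G$ and let $w^\star$ be the first vertex of $X$ met when traversing $P$ starting at $u$; such a vertex exists because $w$ itself belongs to $X$. By the choice of $w^\star$, the prefix $P_1$ of $P$ from $u$ to $w^\star$ is $X$-avoiding, so its length $\ell$ satisfies $\rho_r^G[u,X](w^\star) \leq \ell \leq |P| \leq r$. Consequently, the remaining subpath $P_2$ from $w^\star$ to $w$ has length at most $r-\ell$.

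To conclude, I invoke the profile equality. The hypothesis gives $v \in D'$ with $\rho_r^G[v,X] = \rho_r^G[u,X]$, so in particular $\rho_r^G[v,X](w^\star) \leq \ell$, which exhibits an $X$-avoiding walk $Q$ from $v$ to $w^\star$ of length at most $\ell$. Concatenating $Q$ with $P_2$ yields a walk from $v$ to $w$ in $G$ of length at most $\ell + (r-\ell) = r$, proving $\dist_G(v,w) \leq r$. Since $w \in X$ was arbitrary, $D'$ distance-$r$ dominates $X$. I do not anticipate a real obstacle: the only conceptual point is the mismatch between projection profiles (which record $X$-avoiding distances) and the $r$-domination condition (which uses arbitrary paths), and the ``first crossing with $X$'' trick in paragraph two resolves it, exposing an $X$-avoiding prefix that the profile can see while leaving a short suffix that glues back on within the budget $r$.
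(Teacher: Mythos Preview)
Your proof is correct; the paper itself does not prove this lemma at all, stating only that it ``is immediate from the definitions.'' Your direct verification via the first-crossing vertex $w^\star\in X$ on a shortest $u$--$w$ path is exactly the argument one needs to bridge the gap between arbitrary distances (used in $r$-domination) and $X$-avoiding distances (recorded by profiles), and is presumably what the author had in mind as immediate.

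One minor remark on the technicality you flagged: rather than imposing the convention $D,D'\subseteq V(G)\setminus X$ and saying vertices of $D\cap X$ are ``copied verbatim into $D'$'' (which sits slightly awkwardly with the lemma's phrasing, where $D'$ is \emph{given}), it is cleaner to note that if $u\in D\cap X$ then the hypothesis $\rho_r^G[u,X]=\rho_r^G[v,X]$ is simply undefined, so the lemma should be read under the tacit assumption $D\subseteq V(G)\setminus X$. In the paper's applications this is harmless, since one can always move a dominator in $X$ to itself. Either reading is fine and does not affect the substance of your argument.
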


The following generalization of the \emph{Tree Closure Lemma}
(Lemma 4.7 of Eiben et al.~\cite{eiben2017}) shows that we 
can  re-combine the pieces in nowhere dense graph classes. 

\begin{lemma}\label[lemma]{lem:tree-closure}
There exists
a function $f$ such that the
following holds. Let $G\in\CCC$, let $X\subseteq V(G)$, 
and let
$\epsilon>0$. Define an equivalence relation 
$\sim_{X,r}$ on $V(G)$ by 
\[u\sim_{X,r}v\Leftrightarrow \rho_r^G[u,X]=\rho_r^G[v,X].\]
Then we can compute in time $\Oof(|X|^{t(1+\epsilon)}\cdot n^{1+\epsilon})$ a subgraph $G'\subseteq G$ of $G$ 
such that 
\begin{align*}
\textit{1)} & \quad X\subseteq V(G'), \\
\textit{2)} & \quad \text{for every $u\in V(G)$ there 
is $v\in V(G')$ with $\rho_r^G[u,X]=\rho_r^{G'}[v,X]$},\\
\textit{3)} & \quad \text{for every set~$\YYY$ of at most $2t$ projection classes (i.e., equivalence classes of $\sim_{X,r}$),}\\
& \quad \quad\text{if $\mathbf{st}_G(\YYY)\leq 2t$, then $\mathbf{st}_{G'}(\YYY)=\mathbf{st}_G(\YYY)$, and }\\
\textit{4)} & \quad |V(G')|\leq f(r,t,\epsilon)\cdot |X|^{2+\epsilon}.
\end{align*}
Note that in item \textit{3)}, due to item \textit{2)}, 
every class of $\sim_{X,r}$ which is non-empty
in $G$ is also a non-empty class of $\sim_{X,r}$ in $G'$. 
\end{lemma}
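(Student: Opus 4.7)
The plan is to build $G'$ in two phases: a \emph{profile phase} to satisfy items 1 and 2, and a \emph{Steiner phase} to satisfy item 3.

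\textbf{Profile phase.} Put $X\subseteq V(G')$ and set $\epsilon':=\epsilon/2$. By \Cref{lem:projection-complexity}, the number of $\sim_{X,r}$-classes on $V(G)\setminus X$ is at most $\fproj(r,\epsilon')\cdot|X|^{1+\epsilon'}$. For each such class $\tau$, choose an arbitrary representative $u_\tau$, add it to $V(G')$, and for every $v\in X$ with $d:=\rho_r^G[u_\tau,X](v)\le r$ add a shortest $X$-avoiding $u_\tau v$-path of length $d$ to $G'$. Items 1 and 2 then follow at once, because for every $u\in V(G)$ the representative $u_\tau$ of its class satisfies $\rho_r^{G'}[u_\tau,X]=\rho_r^G[u,X]$ thanks to the included paths. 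The vertex contribution of this phase is at most $r\cdot|X|\cdot\fproj(r,\epsilon')\cdot|X|^{1+\epsilon'}=\Oof(|X|^{2+\epsilon'})$.

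\textbf{Steiner phase.} For item 3, a naive enumeration over all tuples $\YYY$ of $\le 2t$ projection classes would add $\Oof\!\bigl(|X|^{(1+\epsilon')\cdot 2t}\bigr)$ vertices, far exceeding the $|X|^{2+\epsilon}$ budget. Instead, I would exploit a defining property of nowhere denseness: the existence of a vertex ordering with small weak coloring number. First, I would localize $G$ to a closure $H\subseteq G$ of $X$ of size $\Oof(|X|^{1+\epsilon''})$, large enough to contain every minimum group Steiner tree of size $\le 2t$ whose leaves hit any $\le 2t$ of the profile classes (such a closure can be constructed by iteratively adding neighborhoods and representatives for auxiliary distances bounded by $rt$). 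Inside $H$, fix an ordering $\sigma$ with $\wcol_{2t}(H,\sigma)\le h(t,\epsilon'')\cdot|V(H)|^{\epsilon''}$. A minimum group Steiner tree $T$ with $|V(T)|\le 2t$ has diameter $<2t$, so if $w$ is its $\sigma$-minimum vertex then $V(T)\subseteq\{w\}\cup\WReach_{2t}(H,\sigma,w)$. Adding to $G'$, for every $w\in V(H)$, the subgraph of $H$ induced by $\{w\}\cup\WReach_{2t}(H,\sigma,w)$ thus captures every relevant Steiner tree. The total vertex contribution is $\Oof\!\bigl(|V(H)|\cdot|V(H)|^{\epsilon''}\bigr)=\Oof\!\bigl(|X|^{(1+\epsilon'')^{2}}\bigr)$, which absorbs into the $\Oof(|X|^{2+\epsilon})$ budget after choosing $\epsilon''$ sufficiently small compared to $\epsilon$.

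\textbf{Main obstacle.} The technical heart is the Steiner phase: constructing the closure $H$, bounding its size in $|X|$, controlling $\wcol_{2t}(H,\sigma)$, and verifying that every minimum group Steiner tree for a set of at most $2t$ projection classes with at most $2t$ vertices is in fact contained in $H$. I expect this to require a delicate two-parameter interplay between \Cref{lem:projection-complexity} and the weak-coloring-number characterization of nowhere dense classes, with $\epsilon',\epsilon''$ carefully chosen so that the final exponent on $|X|$ stays at $2+\epsilon$ independently of $t$. The stated running time $\Oof(|X|^{t(1+\epsilon)}\cdot n^{1+\epsilon})$ suggests that the algorithm does enumerate tuples of up to $t$ classes when building $H$, but that the vertices eventually placed in $G'$ overlap heavily across tuples, explaining why the output size remains polynomial of degree $2+\epsilon$ while the construction time carries the exponent $t$.
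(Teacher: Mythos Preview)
Your profile phase is correct and essentially what the paper does for items~1 and~2. The Steiner phase, however, has two genuine gaps.

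First, the weak-reachability containment is stated in the wrong direction. If $w$ is the $\sigma$-minimum of a tree $T$ with diameter $<2t$, then for every $v\in V(T)$ the tree-path from $v$ to $w$ certifies $w\in\WReach_{2t}[H,\sigma,v]$; it does \emph{not} give $v\in\WReach_{2t}[H,\sigma,w]$, since $\WReach_{2t}[H,\sigma,w]$ contains only vertices $\sigma$-smaller than or equal to $w$. Reversing the roles does not save the argument either: including $\{v:w\in\WReach_{2t}[H,\sigma,v]\}$ for every $w\in V(H)$ simply returns all of $V(H)$, so nothing is gained beyond the closure $H$ itself.

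Second, and this is the real circularity, the closure $H$ of size $\Oof(|X|^{1+\epsilon''})$ is asserted, not constructed. A minimum group Steiner tree for a family $\YYY$ may use internal vertices that are not near any fixed set of class representatives, and different families $\YYY$ force genuinely different trees. Showing that one can choose one minimum tree per admissible $\YYY$ so that their union has size polynomial in $|X|$ with exponent independent of $t$ is precisely the content of item~4; your parenthetical about ``iteratively adding neighbourhoods'' does not establish this, and indeed you are assuming a stronger bound (exponent $1+\epsilon''$) than the lemma claims (exponent $2+\epsilon$).

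The paper proceeds in the opposite order. It \emph{defines} $G'$ as the union of $X$, one fixed minimum tree $T_\YYY$ per admissible $\YYY$, and short $X$-avoiding BFS paths securing the profiles of all terminals; this immediately gives items~1--3 and the running time, with only the naive size estimate $\Oof(|X|^{2t(1+\epsilon)})$. The work then goes into proving that this $G'$ is in fact small. To do so the paper embeds $G'$ into an auxiliary supergraph $\dot G$ obtained by attaching, for each class $\kappa$, a $2r$-subdivided tree $T_\kappa$ joining all terminals of $\kappa$ to a new root $v_\kappa$; since $\dot G$ sits inside a subdivision of $G'\bullet K_x$ with $x$ the number of classes, one obtains $\wcol_{4r^2+2t}(\dot G)\le\Oof(|X|^{1+\epsilon'})$. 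Each $T_\YYY$ extends to an ordinary Steiner tree in $\dot G$ whose terminals lie among the $\Oof(|X|^{1+\epsilon'})$ roots; decomposing it into root-to-root paths and splitting each such path at its $\sigma$-minimum (which by \Cref{lem:wcol-sep} lies in the weak reach of \emph{both} endpoints) lets one charge the vertices of $\dot G$ to pairs of the form (root, weakly reached split vertex), yielding the bound $\Oof(|X|^{1+\epsilon'})\cdot\Oof(|X|^{1+\epsilon'})=\Oof(|X|^{2+\epsilon})$. The idea you are missing is that the charging is anchored at the \emph{terminals} $v_\kappa$, of which there are only $\Oof(|X|^{1+\epsilon'})$ regardless of $t$, rather than at the $\sigma$-minimum of each tree.
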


We defer the proof of the lemma to the next section.

\begin{lemma}\label[lemma]{lemma:pre-kernel}
Let $\epsilon>0$ and let $q$ be the 
polynomial from \Cref{lem:findcore}. There exists an algorithm running in time $\Oof(q(k)^{t(1+\epsilon)}\cdot n^{1+\epsilon})$ that, given an $n$-vertex graph 
  $G\in \CCC$ and a positive integer $k$, either returns that there exists
  no connected distance-$r$ dominating set of $G$, or 
  returns a subgraph $G'\subseteq G$ and a vertex subset $Z\subseteq V(G')$ with the following properties:
  \begin{align*}
  \textit{1)} &\quad \text{$Z$ is a $(k,r)$-domination 
  core of $G$,}\\
  \textit{2)}&\quad \text{$\Opt_{\textsc{ACDS}_r}((G',Z),k)\leq 
  \alpha\cdot \Opt_{\textsc{CDS}_r}(G,k)$, and}\\
  \textit{3)}&\quad \text{$|V(G')|\leq p(k)$, for some polynomial $p$ whose degree depends only on $r$.}
  \end{align*}
\end{lemma}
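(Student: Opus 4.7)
The plan is to build the bi-kernel in two phases. First, invoke \Cref{lem:findcore} to obtain a connected $(k,r)$-domination core $Z \subseteq V(G)$ with $|Z| \leq q(k)$, or to conclude that $G$ has no $r$-dominating set of size at most $k$ (in which case we output a trivial positive instance). Second, apply \Cref{lem:tree-closure} to $G$ with $X = Z$ and a fixed $\epsilon > 0$ (say $\epsilon = 1$) to produce a subgraph $G' \subseteq G$ with $Z \subseteq V(G')$, and return $(G', Z)$ as the bi-kernel output. Then $|V(G')| \leq f(r,t,\epsilon) \cdot q(k)^{2+\epsilon}$ is polynomial in $k$ of degree $(2+\epsilon) \deg(q)$, which depends only on $r$ once $\epsilon$ is fixed; this gives item \textit{3)}, and item \textit{1)} is immediate from \Cref{lem:findcore}.

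The work is in verifying item \textit{2)}. Let $D^*$ be an optimal connected distance-$r$ dominating set of $G$ with $k^* = |D^*| = \Opt_{\textsc{CDS}_r}(G,k)$. I would handle the cases $k^* = k+1$ (trivial instance) and $k^*$ smaller than some constant depending only on $r$ and $\alpha$ (brute force) separately. Otherwise, take a spanning tree $T^*$ of $G[D^*]$ and apply \Cref{lem:cover} to obtain a $(T^*, t)$-covering family $\FFF$ with $|\FFF| \leq k^*/t + 1$, $\sum_{T \in \FFF} |V(T)| \leq (1+1/t)k^* + 1$, and each $|V(T)| \leq 2t$. For each $T \in \FFF$, let $\YYY_T$ be the set of $\sim_{Z,r}$-classes hit by $V(T)$; since $|V(T)| \leq 2t$, we have $|\YYY_T| \leq 2t$ and $T$ itself witnesses $\mathbf{st}_G(\YYY_T) \leq 2t$. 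Property \textit{3)} of \Cref{lem:tree-closure} then supplies a Steiner tree $T'_T \subseteq G'$ for $\YYY_T$ of size $|V(T'_T)| \leq |V(T)|$. Setting $D'' = \bigcup_T V(T'_T) \subseteq V(G')$ gives $|D''| \leq (1+1/t)k^* + 1$, and $D''$ contains a vertex of every $\sim_{Z,r}$-class hit by $D^*$. Combining \Cref{lem:dswithproj} with property \textit{2)} of \Cref{lem:tree-closure} (the bridge between $G$-profiles and $G'$-profiles) then shows that $D''$ distance-$r$ dominates $Z$ in $G'$. Since $G'[D'']$ has at most $|\FFF|$ connected components and $G'[Z]$ is connected (edges of a spanning tree of $G[Z]$ are preserved in $G'$ by applying property \textit{3)} to pairs of singleton $Z$-classes), \Cref{lem:ds-cds} in $G'$ delivers a connecting set $Q$ with $|Q| \leq 2r|\FFF| \leq 2r(k^*/t + 1)$. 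The final $D' = D'' \cup Q$ has size at most $(1 + (2r+1)/t)k^* + (2r+1)$, which for $t$ chosen appropriately in terms of $\alpha$ and $r$, and with the additive term absorbed by the brute-force regime, is at most $\alpha k^*$.

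The main obstacle I anticipate is the cross-graph domination step: \Cref{lem:dswithproj} is stated inside a single graph, while we must pass from ``$D^*$ distance-$r$ dominates $Z$ in $G$'' to ``$D''$ distance-$r$ dominates $Z$ in $G'$''. Property \textit{2)} of \Cref{lem:tree-closure} provides the bridge, since every $G$-profile of a vertex with respect to $Z$ is realized as a $G'$-profile of some vertex in $V(G')$, so that the reasoning of \Cref{lem:dswithproj} transports to this two-graph setting once one matches up profiles between $G$ and $G'$. A secondary subtlety is absorbing the additive constant $(2r+1)$ in $|D'|$, which is what forces the separate brute-force treatment of the $k^* = O_{\alpha,r}(1)$ regime; the running-time bound $\Oof(q(k)^{t(1+\epsilon)} \cdot n^{1+\epsilon})$ is inherited from \Cref{lem:tree-closure}.
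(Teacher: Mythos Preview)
Your approach mirrors the paper's proof: compute a connected domination core $Z$ via \Cref{lem:findcore}, build $G'$ via \Cref{lem:tree-closure}, transport an optimal connected dominator $D^*$ through a covering family into $G'$, and reconnect the pieces via \Cref{lem:ds-cds}. The paper is in fact less careful than you are about whether the reconnection happens in $G$ or in $G'$; you correctly insist it must happen inside $G'$, which forces you to argue that $G'[Z]$ is connected.

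Your justification for that step --- ``applying property \textit{3)} to pairs of singleton $Z$-classes'' --- does not go through as written: projection profiles $\rho_r^G[u,Z]$ are defined only for $u\in V(G)\setminus Z$, so vertices of $Z$ are not members of any $\sim_{Z,r}$-class, and property \textit{3)} of \Cref{lem:tree-closure} speaks only about group Steiner trees over such classes. The easy repair is to add the induced subgraph $G[Z]$ to $G'$ explicitly (at most $q(k)^2$ extra edges, harmless for the polynomial bound in item \textit{3)}); then $G'[Z]=G[Z]$ is connected and \Cref{lem:ds-cds} applies in $G'$ as you intend. Two minor slips: in the failure branch of \Cref{lem:findcore} the lemma asks you to report that no solution of size at most $k$ exists (the theorem then outputs a trivial \emph{negative} instance, not a positive one), and $\epsilon$ is a given parameter of the lemma, so you should pass it through to \Cref{lem:tree-closure} rather than fix $\epsilon=1$.
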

\begin{proof}
Using \Cref{lem:findcore}, we first conclude that $G$ cannot be $r$-dominated by a 
connected set of at most $k$ vertices, or we find a 
connected $(k,r)$-domination core $Z\subseteq V(G)$ of $G$ of size at most 
$q(k)$.
In the first case, we reject the instance, otherwise, 
let $G'\subseteq G$ be the subgraph that we obtain 
by applying \Cref{lem:tree-closure} with parameters
$G,Z,t$ and $\epsilon$. Let $p\coloneqq
f(r,t,\epsilon)\cdot q^{2+\epsilon}$ (where~$f$
is the function from \Cref{lem:tree-closure}), which is a polynomial
of degree depending only on~$r$, 
only the coefficients depend on $\alpha$. 

It remains to show that $\Opt_{\textsc{ACDS}_r}((G',Z),k)\leq \alpha\cdot \Opt_{\textsc{CDS}_r}(G,k)$. Let $D^*$ be a minimum connected distance-$r$ dominating set of $G$ of size at most $k$ (if $|D^*|>k$, then $\Opt_{\textsc{ACDS}_r}((G',Z),k)\leq \alpha\cdot \Opt_{\textsc{CDS}_r}(G,k)$ trivially holds). Let $\mathcal{F}=\mathcal{F}(G[D^*],t)=\{T_1,\ldots, T_\ell\}$ 
be a covering family for the connected graph $G[D^*]$ 
obtained by \Cref{lem:cover}. Note that by the lemma we 
have $\ell\leq |D^*|/t+1$ and $\sum_{1\leq i\leq \ell}V(T_i)\leq (1+1/t)|D^*|+1$. 
Moreover, the size of each subtree $T_i$ is at most $2t$. 
By construction of $G'$ (according to item \textit{3)} of 
\Cref{lem:tree-closure}), 
for each $T\in \mathcal{F}$ there exists a tree $T'$ in $G'$ 
of size at most $|V(T)|$ which contains for each $u\in V(T)$
a vertex $v$ with $\rho_r^G[u,Z]=\rho_r^{G'}[v,Z]$.

We construct a new family $\mathcal{F}'$ which we obtain by replacing each $T\in \mathcal{F}$ by the tree $T'$ described
above. Let $D'\coloneqq \bigcup_{T'\in \mathcal{F}'}V(T')$ in $G'$. 
We have \mbox{$\sum_{T'\in \mathcal{F}'}|V(T')|\leq (1+1/t)|D^*|+1$} and
since $D'$ contains
vertices from the same projection classes as $D^*$, according to \Cref{lem:dswithproj},~$D'$ is a distance-$r$ dominating set of $Z$. Moreover, $G[D']$ has at 
most $\ell\leq |D^*|/t+1$ components. We apply \Cref{lem:ds-cds}, and obtain
a set $Q$ of size at most $2r(|D^*|/t+1)$ such that 
$D''=D'\cup Q$ is
a connected distance-$r$ dominating set of $Z$. 
We hence have \[|D''|\leq 2r(|D^*|/t+1)+(1+1/t)|D^*|+1=(1+\frac{2r+1}{t})|D^*|+2r+1\leq (1+\frac{4r+2}{t})|D^*|\] (we may assume
that $2r+1\leq \frac{2r+1}{t}|D^*|$, as otherwise we can simply run a brute force algorithm in polynomial time). We conclude by 
recalling that $t= \frac{\alpha-1}{4r+2}$.
\end{proof}

\begin{theorem}\label{thm:lossykernel}
There exists a polynomial $p$ whose
degree depends only on $r$ such that 
the connected distance-$r$ dominating set problem on $\CCC$
admits an $\alpha$-approximate bi-kernel with 
$p(k)$ vertices. 
\end{theorem}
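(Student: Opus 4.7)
My plan is to assemble the $\alpha$-approximate bi-kernel by pairing \Cref{lemma:pre-kernel} as the reduction step with a very simple solution lifting algorithm. For the reduction side I would fix any constant $\epsilon>0$ (say $\epsilon=1$) and invoke \Cref{lemma:pre-kernel} on input $(G,k)$: if the lemma concludes that $G$ has no distance-$r$ dominating set of size at most $k$, I output a trivial one-vertex \textsc{ACDS}$_r$-instance, otherwise I output the pair $(G',Z)$ together with parameter~$k$. Items~1) and~3) of \Cref{lemma:pre-kernel} immediately deliver the polynomial size bound $|V(G')|\leq p(k)$ with degree depending only on~$r$, while item~2) supplies the inequality on optima $\Opt_{\textsc{ACDS}_r}((G',Z),k)\leq \alpha\cdot\Opt_{\textsc{CDS}_r}(G,k)$ on which the rest of the argument will rest.

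For the solution lifting algorithm, given a solution $D'$ for the reduced instance $(G',Z,k)$, I would distinguish two cases. If $D'$ is a valid connected distance-$r$ dominator of $Z$ in $G'$ of size at most $k$, then because $G'\subseteq G$ the set $D'$ is automatically connected in $G$ and every $z\in Z$ lies within $G$-distance $r$ of $D'$; the $(k,r)$-domination core property then upgrades this to distance-$r$ domination of all of $G$, so I simply set $D=D'$. In the remaining case, where $\textsc{ACDS}_r((G',Z),k,D')\in\{k+1,\infty\}$, I would output any trivial valid solution, for instance $D=V(G)$, which is always connected and distance-$r$ dominating since the graphs in $\CCC$ are assumed connected.

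It then remains to verify the required ratio
\[\frac{\textsc{CDS}_r(G,k,D)}{\Opt_{\textsc{CDS}_r}(G,k)}\leq \alpha\cdot \frac{\textsc{ACDS}_r((G',Z),k,D')}{\Opt_{\textsc{ACDS}_r}((G',Z),k)}.\]
In the first case both numerators equal $|D'|$, and after cancellation the inequality collapses to the bound on the optima from \Cref{lemma:pre-kernel}. In the second case both numerators are at most $k+1$ (the $\textsc{CDS}_r$-value being capped), and dividing the same bound on the optima through again yields the inequality; the rejection branch is handled identically, since then $\Opt_{\textsc{CDS}_r}(G,k)=k+1$ and every output is trivially within the allowed factor. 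Since all the genuine technical work---the existence of a polynomial-sized domination core and the construction of $G'$ via \Cref{lem:tree-closure}---is already encapsulated in \Cref{lemma:pre-kernel}, I do not anticipate any real obstacle here; the only point requiring care is that $\epsilon$ must be picked as an absolute constant, so that the degree of $p(k)$ remains a function of $r$ alone and the dependence on $\alpha$ is pushed entirely into the coefficients.
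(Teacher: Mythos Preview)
Your proof follows essentially the same approach as the paper's: invoke \Cref{lemma:pre-kernel} for the reduction, lift solutions by returning them unchanged, and deduce the ratio inequality from item~\textit{2)} of that lemma. The paper's presentation is terser and does not spell out the case distinction; your explicit handling of the cases $\textsc{ACDS}_r((G',Z),k,D')\in\{k+1,\infty\}$ by falling back to $D=V(G)$ is in fact a small improvement in rigor, since the $(k,r)$-domination core property only applies when $|D'|\le k$, a hypothesis the paper's write-up tacitly assumes.
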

\begin{proof}
The $\alpha$-approximate polynomial time pre-processing
algorithm first calls the algorithm of
\Cref{lemma:pre-kernel}. If 
it returns that there exists no distance-$r$
dominating set of size at most $k$ for $G$, we return 
a trivial negative instance. Otherwise, 
let $((G',Z),k)$ be the annotated instance returned by 
the algorithm. 
The solution lifting algorithm, 
given a connected distance-$r$ dominating set of $Z$
in $G'$ simply returns $D$.

By construction
of $G'$ we have $M_r^{G'}(u,Z)\subseteq M_r^G(u,Z)$ 
for all $u\in V(G')$. Hence every distance-$r$ $Z$-dominator
in $G'$ is also a distance-$r$ $Z$-dominator in $G$. 
In particular, since $Z$ is a 
$(k,r)$-domination core, $D$ is also a connected 
distance-$r$ dominating set for $G$. 

Finally, by \Cref{lemma:pre-kernel}
we have $\Opt_{\textsc{ACDS}_r}((G',Z),k)\leq 
\alpha\cdot \Opt_{\textsc{CDS}_r}(G,k)$, which implies
\begin{align*}
\frac{{\textsc{CDS}_r}(G,k,D)}{\Opt_{\textsc{CDS}_r}(G,k)}\leq \alpha(k)\cdot \frac{{\textsc{ACDS}_r}((G',Z),k,D)}{\Opt_{\textsc{ACDS}_r}((G',Z),k)}. \hspace{6.1cm}\qedhere
\end{align*}
\end{proof}

Observe that we obtain a $1$-approximate bi-kernel for
the distance-$r$ dominating set problem by just taking one 
vertex from each projection class of the $(k,r)$-domination
core.

\section{The proof of \Cref{lem:tree-closure}}\label{sec:tree-closure}

\Cref{lem:tree-closure} is the most technical contribution
of this paper. This whole section is devoted to its proof. 
We will mainly make use of a characterization of nowhere 
dense graph classes by the so-called \emph{weak 
coloring numbers}. 

\begin{wcoldef}
For a graph $G$, by $\Pi(G)$ we denote the set of all linear orders
of $V(G)$.  For $u,v\in V(G)$ and
any $s\in\N$, we say that~$u$ is \emph{weakly $s$-reachable} from~$v$
with respect to~$L$, if there is a path $P$ of length at most $s$ connecting $u$ and $v$ such that $u$ is 
the smallest among
the vertices of $P$ with respect to~$L$. By $\WReach_s[G,L,v]$ we
denote the set of vertices that are weakly $s$-reachable from~$v$ with
respect to~$L$. For any subset $A\subseteq V(G)$, we let
$\WReach_s[G,L,A] = \bigcup_{v\in A} \WReach_s[G,L,v]$.  The
\emph{weak $s$-coloring number $\wcol_s(G)$} of $G$ is defined as 
\begin{eqnarray*}
  \wcol_s(G)& = & \min_{L\in\Pi(G)}\:\max_{v\in V(G)}\:
                   \bigl|\WReach_s[G,L,v]\bigr|.
\end{eqnarray*}
\end{wcoldef}

The weak coloring numbers were introduced by Kierstead and
Yang~\cite{kierstead2003orders} in the context of coloring and
marking games on graphs. As proved by Zhu \cite{zhu2009coloring},
they can be used to characterize both bounded expansion and nowhere
dense classes of graphs. In particular, we use the following.

\begin{theorem}[Zhu \cite{zhu2009coloring}]\label{lem:wcolbound}
  Let $\CCC$ be a nowhere dense class of graphs.
  There is a function $f_{\wcol}$ such that
  for
  all $s\in\N$, $\epsilon>0$, and $H\subseteq G\in \CCC$ we have
  $\wcol_s(H)\leq f_{\wcol}(s,\epsilon) \cdot |V(H)|^\epsilon$.
\end{theorem}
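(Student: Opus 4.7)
The plan is to derive the bound by combining a density characterization of nowhere dense classes with a purely combinatorial relationship linking weak coloring numbers, admissibility, and shallow topological minor density. First, I take as a black box the equivalent formulation of nowhere denseness due to Ne\v set\v ril and Ossona de Mendez: a class $\CCC$ is nowhere dense if and only if for every $r \in \N$ and $\delta > 0$ there exists $c(r,\delta)$ such that every $r$-shallow topological minor of a subgraph $H \subseteq G \in \CCC$ has at most $c(r,\delta) \cdot |V(H)|^{1+\delta}$ edges; equivalently, $\tilde{\grad}_r(H) \leq c(r,\delta) \cdot |V(H)|^{\delta}$, where $\tilde{\grad}_r$ is the topological greatest reduced density at depth $r$. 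This characterization is part of the foundational work on nowhere dense classes and will be used without proof.

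The combinatorial core is the $s$-admissibility $\adm_s(G)$: the minimum over orderings $L$ of the maximum, over vertices $v$, of the largest family of internally vertex-disjoint paths of length at most $s$ from $v$ to an $L$-smaller vertex with internal vertices strictly $L$-greater than $v$. I would first establish a bound of the form $\wcol_s(G) \leq \adm_s(G)^{O(s)}$: given an ordering $L$ witnessing $\adm_s(G) = a$, any vertex weakly $s$-reachable from $v$ is reached along a short back-path, and a layered branching argument on the length of these paths shows that the total number of such endpoints is controlled by an $a^{O(s)}$-sized branching tree. Next, I would prove a bound of the form $\adm_s(G) \leq h(s) \cdot \tilde{\grad}_{\lfloor s/2\rfloor}(G)$ for some function $h$. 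By a greedy elimination argument (repeatedly remove a vertex of minimum current admissibility and place it last), it suffices to show that every nonempty $G'\subseteq G$ contains a vertex of $s$-admissibility bounded by $h(s) \cdot \tilde{\grad}_{\lfloor s/2\rfloor}(G')$. If no such vertex existed, one could apply Menger's theorem iteratively to extract at many candidate roots, simultaneously, large families of internally vertex-disjoint short paths, and assemble them into the subdivision of a dense bipartite graph on many branch vertices, contradicting the assumed density bound.

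Combining the pieces, given $s \in \N$ and $\epsilon > 0$, set $\delta = \epsilon/(C s)$ for the hidden constant $C$ in the $\wcol$-to-$\adm$ exponent. Then for every $H \subseteq G \in \CCC$ we get $\tilde{\grad}_{\lfloor s/2\rfloor}(H) \leq c(s,\delta) \cdot |V(H)|^{\delta}$, hence $\adm_s(H) \leq h(s)\, c(s,\delta) \cdot |V(H)|^{\delta}$, and therefore
\[
\wcol_s(H) \;\leq\; \bigl(h(s)\, c(s,\delta)\bigr)^{O(s)} \cdot |V(H)|^{O(s)\cdot \delta} \;\leq\; f_{\wcol}(s,\epsilon) \cdot |V(H)|^{\epsilon},
\]
after absorbing the $O(s)$ factor in the exponent into the choice of $\delta$. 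The main obstacle is the admissibility-to-density step: producing, in every candidate subgraph $G'$, a vertex of low $s$-admissibility requires a careful Menger/fan construction that simultaneously controls the disjointness, the length, and the inner-vertex ordering condition of the paths at each root, and this is the heart of Zhu's original argument. The $\wcol_s$-to-$\adm_s$ comparison is a routine branching count, and the topological-minor density equivalence is by now a standard black box for nowhere dense classes.
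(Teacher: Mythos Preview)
The paper does not supply a proof of this theorem: it is quoted as a result of Zhu and used throughout as a black box, so there is no in-paper argument to compare your proposal against.

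Your outline is nonetheless a valid reconstruction of one standard route to the result. One minor correction: the depth parameter in the admissibility--density inequality should be $s-1$ (or $s$, depending on convention) rather than $\lfloor s/2\rfloor$, since the paths of length at most $s$ from $v$ become the subdivided edges of the topological minor you build; and depending on which version of the bound one cites, the dependence on $\tilde{\grad}$ may be polynomial rather than linear. Neither issue affects the conclusion, as any fixed polynomial in $\tilde{\grad}$ is absorbed by rescaling~$\delta$. It is also worth noting that Zhu's original 2009 argument predates the admissibility connection (that link was established later by Dvo\v{r}\'ak and by Grohe et al.) and instead bounds $\col_s$ directly in terms of iterated grads via a greedy degeneracy-type ordering, followed by the Kierstead--Yang inequality $\wcol_s\leq\col_s^s$. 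Your admissibility route is a legitimate and arguably more transparent alternative for the nowhere-dense form of the statement.
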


One can define artificial classes where the functions $f_\wcol$ grow 
arbitrarily fast, however, on many familiar sparse graph classes they are
quite tame, e.g.\ on bounded tree-width graphs~\cite{GroheKRSS15}, 
graphs with excluded minors~\cite{siebertz16} or excluded topological
minors~\cite{KreutzerPRS16}. Observe that in any case
the theorem allows to pull polynomial blow-ups on the graph 
size to the function $\fwcol$. More precisely, for any $\epsilon>0$, 
if we deal with a subgraph of size $n^x$ for some $x\in \N$, 
by re-scaling $\epsilon$ to $\epsilon'=\epsilon/x$, we will 
get a bound of $\fwcol(s,\epsilon')\cdot (n^x)^{\epsilon'}
=\fwcol(s,\epsilon')\cdot n^\epsilon$ for the weak $s$-coloring 
number. 

Our second application of the weak coloring numbers is described in the next lemma, which shows that 
they capture the local separation properties of a 
graph. 

\begin{lemma}[see Reidl et al.~\cite{reidl2016characterising}]\label[lemma]{lem:wcol-sep}
Let $G$ be a graph and let $L\in \Pi(G)$. Let $X\subseteq V(G)$, $y\in V(G)$
and let $P$ be a path of length at most $r$ between a vertex $x\in X$ and $y$. 
Then \[\big(\WReach_r[G,L,X]\cap \WReach_r[G,L,y]\big)\cap V(P)\neq \emptyset.\]
\end{lemma}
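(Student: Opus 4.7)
The plan is to exhibit a single witness vertex on $P$ that lies in both weakly reachable sets simultaneously, namely the $L$-minimum vertex of $V(P)$. Concretely, I would let $z\in V(P)$ be the vertex that is smallest with respect to the order $L$ among all vertices of $P$, and then verify the two membership claims by restricting $P$ to its two sub-paths ending at $z$.

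For the first claim, I would consider the sub-path $P_1$ of $P$ from $y$ to $z$. Since $P$ has length at most $r$, so does $P_1$. By the choice of $z$ as the $L$-minimum of $V(P)$, $z$ is in particular $L$-minimum among $V(P_1)\subseteq V(P)$, so $P_1$ witnesses that $z\in \WReach_r[G,L,y]$. Symmetrically, the sub-path $P_2$ of $P$ from $x$ to $z$ has length at most $r$ and contains $z$ as its $L$-minimum vertex, so $z\in \WReach_r[G,L,x]\subseteq \WReach_r[G,L,X]$. Since $z\in V(P)$ by construction, this puts $z$ in the intersection claimed by the lemma.

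There is essentially no technical obstacle here: the argument is just an application of the definition of weak $r$-reachability, combined with the trivial observation that each sub-path of a path of length at most $r$ also has length at most $r$, and that the $L$-minimum of a set remains the $L$-minimum of any subset containing it. The only point worth stating carefully in the write-up is that the empty case (\emph{e.g.} $x=z$ or $y=z$) is covered by the convention that the length-zero path from a vertex to itself trivially witnesses weak $r$-reachability.
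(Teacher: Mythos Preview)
Your proposal is correct and follows exactly the same approach as the paper: both take $z$ to be the $L$-minimum vertex of $P$ and observe that $z\in\WReach_r[G,L,x]$ and $z\in\WReach_r[G,L,y]$. The paper's proof is just the two-line version of what you wrote; your additional remarks about the sub-paths and the degenerate cases are fine but not needed beyond the bare definition check.
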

\begin{proof}
Let $z$ be the minimal vertex of $P$ with respect to $L$. Then both $z\in \WReach_r[G,L,x]$ and $z\in \WReach_r[G,L,y]$. 
\end{proof}

We are now ready to define the graph $G'$ whose
existence we claimed in the previous section. 

\begin{graphgp}\label{def:GX}
Let $G\in\CCC$ and fix a subset $X\subseteq V(G)$. 
Define an equivalence relation 
$\sim_{X,r}$ on $V(G)$ by 
\[u\sim_{X,r}v\Leftrightarrow \rho_r^G[u,X]=\rho_r^G[v,X].\] 
For each subset~$\YYY$ of projection classes 
of size at most $2t$, if $\mathbf{st}_G(\YYY)\leq 2t$, 
fix a Steiner tree~$T_\YYY$ for~$\YYY$ of minimum size.
For such a tree $T_\YYY$ call a vertex $u\in \kappa\cap V(T_\YYY)$ with $\kappa\in \YYY$ a \emph{terminal} of~$T_\YYY$. 
We let $C=\{u\in V(G) : u$ is a terminal of some
$T_\YYY\}$. 

Let~$G'$ be a subgraph of $G$ which contains 
$X$, all $T_\YYY$ as above, and a set of vertices and 
edges such that $\rho_r^G[u,X]=\rho_r^{G'}[u,X]$
for all $u\in C$. 
\end{graphgp}

\begin{lemma}\label[lemma]{lem:computeG_X}
There exist functions $f$ and 
$g$ such that 
for every $G\in\CCC$, $X\subseteq V(G)$ and $\epsilon>0$ 
we can compute 
a graph $G'$ as described above of size $f(r,\epsilon)\cdot
|X|^{2t(1+\epsilon)}$ in time $g(r,t,\epsilon)\cdot
|X|^{2t(1+\epsilon)}$. 
\end{lemma}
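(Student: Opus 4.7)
The plan is to construct $G'$ in three stages: enumerate the projection classes of $\sim_{X,r}$, find for every subset of at most $2t$ classes a minimum Steiner tree of size $\leq 2t$ (when one exists), and splice in witness paths so that every terminal's $r$-projection profile on $X$ is preserved.

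For the first stage I would invoke \Cref{lem:projection-complexity} with some $\epsilon'>0$ to be fixed later: the number of non-empty classes of $\sim_{X,r}$ is at most $\fproj(r,\epsilon')\cdot |X|^{1+\epsilon'}$, and these classes, together with one representative per class, can be computed in polynomial time by running, from each $v\in X$, a BFS that forbids other vertices of $X$ as interior vertices of BFS paths; as a byproduct this also produces, for every vertex $u$ and every $w\in M^G_r(u,X)$, a shortest $X$-avoiding $u$-to-$w$ path. The third stage is then routine: for each selected tree $T_\YYY$, each terminal $u\in V(T_\YYY)$, and every $w\in M^G_r(u,X)$, I add the precomputed $X$-avoiding $u$-to-$w$ path to $G'$. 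By construction this guarantees $\rho^{G'}_r[u,X]=\rho^G_r[u,X]$ for every terminal $u$, discharging the last clause of the definition of $G'$.

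The heart of the argument is the second stage. The number of subsets $\YYY$ of at most $2t$ classes is bounded by $(\fproj(r,\epsilon'))^{2t}\cdot |X|^{2t(1+\epsilon')}$, and for each such $\YYY$ I must decide whether $\mathbf{st}_G(\YYY)\leq 2t$ and, if so, return a minimum witness. A brute-force search over $V(G)^{2t}$ is too expensive. Instead I plan to use the weak-coloring enumeration of small connected subgraphs: fix a linear order $L$ of $V(G)$ witnessing the bound on $\wcol_{2t}(G)$ from \Cref{lem:wcolbound}. Any connected subgraph $T$ of $G$ with $|V(T)|\leq 2t$ has a unique $L$-smallest vertex $v$, and for every other $u\in V(T)$ every $u$-to-$v$ path within $T$ has $v$ as its $L$-minimum, so $v\in\WReach_{2t}[G,L,u]$. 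Hence $V(T)$ is contained in the inverse-reachability set $\{u\colon v\in\WReach_{2t}[G,L,u]\}$, and $T$ can be grown vertex-by-vertex from $v$ with branching bounded in terms of $\wcol_{2t}(G)$. Enumerating these subgraphs, labelling each by the tuple of projection profiles of its vertices, and keeping the smallest witness per $\YYY$ produces the family $\{T_\YYY\}$.

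Finally, the size of $G'$ is at most $|X|+2t\cdot N+2tr|X|\cdot N$, where $N\leq(\fproj(r,\epsilon'))^{2t}\cdot |X|^{2t(1+\epsilon')}$ is the number of chosen trees; choosing $\epsilon'$ sufficiently small relative to $\epsilon$ absorbs the extra factor of $|X|$ coming from the witness paths and the $\wcol_{2t}$-factor into $f(r,\epsilon)$, yielding the stated bound $f(r,\epsilon)\cdot |X|^{2t(1+\epsilon)}$; the running time is dominated by the same count times a polynomial-in-$|V(G)|$ overhead for the BFS and tree-growth routines, which is absorbed into $g(r,t,\epsilon)$. The main obstacle is precisely the Steiner-tree search: ensuring the enumeration of candidates costs only a function of $r,t,\epsilon'$ per starting vertex, and this is exactly what the sparsity bound on $\wcol_{2t}$ in nowhere dense classes delivers.
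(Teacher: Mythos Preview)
Your three-stage plan matches the paper's proof: compute projection profiles via BFS from each vertex of $X$, find a minimum group Steiner tree $T_\YYY$ for every small family $\YYY$ of classes, and add short $X$-avoiding witness paths for the terminals; the size accounting is also the same. The substantive difference is in the second stage. The paper does not enumerate connected subgraphs at all; it observes that ``does $\YYY$ admit a group Steiner tree on at most $2t$ vertices, and if so what is a smallest one?'' can be phrased as an existential first-order query over $G$ with the projection classes as unary colour predicates, and then invokes the FO model-checking algorithm for nowhere dense classes to answer each of the $\Oof(|X|^{2t(1+\epsilon)})$ instances in time $h(t,\epsilon)\cdot n^{1+\epsilon}$.

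Your weak-coloring alternative has a gap. The observation that the $L$-minimum $v$ of a connected $T$ with $|V(T)|\le 2t$ lies in $\WReach_{2t}[G,L,u]$ for every $u\in V(T)$ is correct, but it only says that $V(T)$ is contained in the \emph{inverse} reach set $\{u : v\in\WReach_{2t}[G,L,u]\}$, whose size is not controlled by $\wcol_{2t}(G)$: the weak coloring number bounds how many vertices a single $u$ reaches, not how many $u$'s reach a given $v$. Hence growing $T$ outward from $v$ does \emph{not} have branching bounded in terms of $\wcol_{2t}(G)$ --- a star centred at $v$ with $n-1$ leaves has $\wcol_s=2$ for every $s$, yet there are $\binom{n-1}{2t-1}$ connected subgraphs of size $2t$ whose $L$-minimum is $v$. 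One can instead enumerate connected subgraphs of size $\le 2t$ using the $n^{\epsilon}$-degeneracy of $G$ (that is, via $\wcol_1$ rather than $\wcol_{2t}$), but that is a different argument from the one you sketch and needs to be spelled out. Finally, the polynomial-in-$n$ overhead cannot be ``absorbed into $g(r,t,\epsilon)$'' since $g$ does not see $n$; the paper's own proof carries an explicit $n^{1+\epsilon}$ factor (consistent with the running time stated in \Cref{lem:tree-closure}), so the lemma's time bound is itself a bit loose, but you should keep the $n$-dependence explicit rather than hide it.
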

\begin{proof}
According to \Cref{lem:projection-complexity} there is a function
$\fproj$ such that for every $G\in \CCC$, vertex subset 
$A\subseteq V(G)$, and $\epsilon>0$ we have $\projprof_r(G,A)\leq \fproj(r,\epsilon)\cdot |A|^{1+\epsilon}$. We now apply the lemma
to $A=X$. 

We compute for each $v\in X$ 
the first $r$ levels of a breadth-first search 
(which terminates whenever 
another vertex of $X$ is encountered, as to compute $X$-avoiding
paths). For each visited vertex $w\in V(G)$ we remember the
distance to $v$. In this manner, we compute in time
$\Oof(|X|\cdot n^{1+\epsilon})$ the projection profile
of every vertex $w\in V(G)$. Observe that \Cref{lem:wcolbound}
applied to $r=1$ implies that an 
$n$-vertex graph $G\in \CCC$ is $n^\epsilon$-degenerate 
and in particular has only $\Oof(n^{1+\epsilon})$ many edges. 
Hence a breadth-first search can be computed in time
$\Oof(n^{1+\epsilon})$. 

We now decide for each subset $\YYY$ of at most $2t$ 
projection classes whether $\mathbf{st}_G(\YYY)\leq 2t$. 
If this is the case, we also compute a 
Steiner tree $T_\YYY$ of minimum size in time 
$h(t,\epsilon)\cdot n^{1+\epsilon}$ for some
function $h$. To see that this
is possible, observe that the problem is equivalent to testing 
whether an existential 
first-order sentence holds in a colored graph, which is possible
in the desired time on nowhere dense classes~\cite{grohe2014deciding, sparsity}.

Finally, for each sub-tree $T_\YYY$ and each $\kappa\in 
\YYY$ fix some terminal $u\in \kappa\cap V(T_\YYY)$. Compute the 
first~$r$ levels of an $X$-avoiding breadth-first search with
root $u$ and add the vertices and edges of the bfs-tree 
to ensure
that $\rho_r^G[u,X]=\rho_r^{G'}[u,X]$. Observe that
by adding these vertices 
we add at most $|X|\cdot r$ vertices for each vertex $u$. 

As we have  $\Oof\left(\left(|X|^{(1+\epsilon)}\right)^{2t}\right)=\Oof\left(|X|^{2t(1+\epsilon)}\right)$ many subsets
of projection classes of size at most $2t$, we can conclude by defining $f$ and $g$ appropriately. 
\end{proof}

It remains to argue that the graph $G'$ is in fact much smaller than 
our initial estimation in \Cref{lem:computeG_X}. First, as 
outlined earlier, we do not care about polynomial blow-ups when 
bounding the weak coloring numbers. 

\begin{lemma}\label[lemma]{lem:wcolGX}
There is a function $h$ such that
for all $s\in \N$ and $\epsilon>0$
we have \[\wcol_{s}(G')\leq h(r,s,t,\epsilon)\cdot |X|^\epsilon.\]
\end{lemma}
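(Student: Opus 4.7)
The plan is to observe that by construction $G'$ is a subgraph of $G \in \CCC$, so the nowhere dense property (in its $\wcol$-formulation) transfers to $G'$ directly; the only issue is that the natural size bound on $G'$ is polynomial in $|X|$ rather than linear, so we will absorb the polynomial blow-up into the function $h$ by re-scaling $\epsilon$, exactly as flagged in the remark following \Cref{lem:wcolbound}.

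Concretely, I would proceed as follows. First, apply \Cref{lem:computeG_X} with a fixed value, say $\epsilon_0 = 1$, to obtain a uniform bound
\[
|V(G')| \;\leq\; f(r,1)\cdot |X|^{2t(1+1)} \;=\; C\cdot |X|^{4t},
\]
where $C = f(r,1)$ depends only on $r$. Next, since $G'\subseteq G$ and $G\in\CCC$, \Cref{lem:wcolbound} applies to $G'$: for every $\epsilon'>0$ we have
\[
\wcol_s(G') \;\leq\; \fwcol(s,\epsilon')\cdot |V(G')|^{\epsilon'}.
\]

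To obtain the exponent $\epsilon$ on $|X|$, set $\epsilon' \coloneqq \epsilon/(4t)$. Substituting the size bound gives
\[
\wcol_s(G') \;\leq\; \fwcol\!\left(s,\tfrac{\epsilon}{4t}\right)\cdot \bigl(C\cdot|X|^{4t}\bigr)^{\epsilon/(4t)} \;=\; \fwcol\!\left(s,\tfrac{\epsilon}{4t}\right)\cdot C^{\epsilon/(4t)}\cdot |X|^{\epsilon}.
\]
Define
\[
h(r,s,t,\epsilon) \;\coloneqq\; \fwcol\!\left(s,\tfrac{\epsilon}{4t}\right)\cdot f(r,1)^{\epsilon/(4t)},
\]
which depends only on $r,s,t,\epsilon$, and the desired inequality follows.

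There is no real obstacle here: the argument is essentially bookkeeping on top of Zhu's theorem. The only subtlety worth flagging in the write-up is that the seemingly huge size bound $|X|^{2t(1+\epsilon_0)}$ from \Cref{lem:computeG_X} does not damage the statement, because the weak coloring numbers of nowhere dense subgraphs grow subpolynomially, so any fixed polynomial in $|X|$ raised to a small enough power stays below $|X|^\epsilon$. This is precisely the rescaling trick that the paper already highlighted, and it is the mechanism that makes the final bound independent of $2t$ in the exponent (the $t$-dependence is absorbed into the constant $h$).
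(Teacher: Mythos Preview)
Your proposal is correct and follows essentially the same argument as the paper: bound $|V(G')|$ polynomially in $|X|$ via \Cref{lem:computeG_X}, then apply \Cref{lem:wcolbound} with a rescaled $\epsilon'$ to absorb the polynomial exponent into the constant. The only differences are cosmetic choices of constants (the paper uses $\epsilon_0=1/2$ and $\epsilon'=\epsilon/(3t)$ where you use $\epsilon_0=1$ and $\epsilon'=\epsilon/(4t)$).
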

\begin{proof}
Choose $\epsilon'\coloneqq \epsilon/(3t)$. According to  \Cref{lem:computeG_X}, $G'$ has size at most 
$f(r,1/2)\cdot |X|^{3t}$ (apply the lemma with $\epsilon=1/2$). 
According to \Cref{lem:wcolbound}, we have 
\[\wcol_{s}(G')\leq \fwcol(s,\epsilon')\cdot \left(f(r,1/2)\cdot |X|^{3t}\right)^{\epsilon'}.\] Conclude by defining $h(r,s,t,\epsilon)=
\fwcol(s,\epsilon')\cdot f(r,1/2)^{\epsilon'}$. 
\end{proof}

Our next aim is to decompose the group Steiner trees
into single paths which are then analyzed with the help of
the weak coloring numbers. We need a few more
auxiliary lemmas. 

\begin{definition}
The \emph{lexicographic product} $G\bullet H$ of two graphs $G$ and
$H$ is defined by $V(G\bullet H)= V(G)\times V(H)$ and $E(G\bullet
H)=\big\{\{(x,y), (x',y')\} : \{x,x'\}\in E(G)$ or $\big(x=x'$ and
$\{y,y'\}\in E(H)\big)\big\}$. 
\end{definition}

The following two lemmas are easy consequences of 
the definitions. 

\begin{lemma}\label[lemma]{lem:wcollex}
Let $G,H$ be graphs and let $s\in \N$. Then 
$\wcol_{s}(G\bullet H)\leq |V(H)|\cdot \wcol_{s}(G)$. 
\end{lemma}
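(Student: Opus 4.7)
The plan is to construct an order on $V(G\bullet H)$ directly from an optimal order on $V(G)$ and count weakly reachable sets fiber by fiber. Fix $L_G\in\Pi(G)$ achieving $\wcol_s(G)$ and any order $L_H\in\Pi(H)$. Define the \emph{lexicographic extension} $L\in\Pi(G\bullet H)$ by $(x,y)\prec_L (x',y')$ iff either $x\prec_{L_G} x'$, or $x=x'$ and $y\prec_{L_H} y'$. I would argue that for every $(x,y)\in V(G\bullet H)$,
\[
\bigl|\WReach_s[G\bullet H, L, (x,y)]\bigr| \;\leq\; |V(H)|\cdot \bigl|\WReach_s[G, L_G, x]\bigr|,
\]
which yields the claim after taking the maximum over $(x,y)$.

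The heart of the argument is a projection lemma. Given any $(u,w)\in\WReach_s[G\bullet H, L, (x,y)]$ witnessed by a path $P=(x_0,y_0),\ldots,(x_\ell,y_\ell)$ in $G\bullet H$ of length $\ell\leq s$ with $(x_0,y_0)=(x,y)$ and $(x_\ell,y_\ell)=(u,w)$, I first project $P$ onto $G$ by reading the first coordinates and collapsing equal consecutive entries. The definition of $E(G\bullet H)$ guarantees that whenever $x_i\neq x_{i+1}$ we have $\{x_i,x_{i+1}\}\in E(G)$, so this yields a walk $W$ of length at most $s$ from $x$ to $u$ in $G$. Extract a simple path $P'$ from $x$ to $u$ inside $W$. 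Since $(u,w)$ is the $L$-minimum of $P$ and $L$ compares first coordinates lexicographically, every vertex $x_i$ appearing on $P$ satisfies $u\preceq_{L_G} x_i$; hence $u$ is the $L_G$-minimum on $P'$ as well. This shows $u\in\WReach_s[G,L_G,x]$.

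Having established the projection, I conclude that the map $(u,w)\mapsto u$ sends $\WReach_s[G\bullet H, L, (x,y)]$ into $\WReach_s[G,L_G,x]$, with each preimage of size at most $|V(H)|$. Multiplying gives the fiberwise bound, and taking the maximum over all $(x,y)$ and the minimum over orders delivers $\wcol_s(G\bullet H)\leq |V(H)|\cdot \wcol_s(G)$.

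I do not anticipate a real obstacle: the main subtlety is only making sure the walk-to-path passage preserves the property ``$u$ is the minimum,'' which follows because $u$ is an endpoint of the walk and the minimum condition passes to any sub-walk containing $u$. No use of the nowhere dense hypothesis or of any bound on $H$ is needed; the statement is a purely structural inequality about weak coloring numbers under lexicographic product.
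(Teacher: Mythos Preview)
Your argument is correct and is exactly the natural proof; the paper itself omits the proof entirely, stating only that the lemma is an ``easy consequence of the definitions,'' so there is nothing to compare against beyond noting that your lexicographic extension of an optimal order on $G$ is the obvious construction and your projection-to-first-coordinate argument is the intended verification.
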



\begin{lemma}\label[lemma]{lem:wcolsubdiv}
Let $G$ be a graph and let $r',s\in\N$. 
Let $H$ be any graph obtained by replacing some
edges of $G$ by paths of length $r$. 
Then $\wcol_{r'}(H)\leq s+\wcol_{r'}(H)$. 
\end{lemma}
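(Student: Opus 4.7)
The plan is to lift a good linear order from $G$ to $H$ and bound weak reachability subdivided-edge by subdivided-edge. First, I would fix $L_G\in\Pi(G)$ realising $\wcol_{r'}(G)$ and define $L_H\in\Pi(H)$ by listing the vertices of $V(G)\subseteq V(H)$ first, in their $L_G$-order, and then appending all new internal subdivision vertices in an arbitrary fixed order. The crucial structural feature of $L_H$ is that the original vertices $V(G)$ form an initial segment, so any $L_H$-minimum on a witnessing path that meets $V(G)$ must itself lie in $V(G)$.

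Second, I would bound $|\WReach_{r'}[H,L_H,v]|$ by case analysis on $v\in V(H)$. If $v\in V(G)$, then any $u\in\WReach_{r'}[H,L_H,v]$ must lie in $V(G)$ by the initial-segment property, and contracting each subdivided edge traversed in the witnessing $H$-path yields a walk of length at most $r'$ in $G$ certifying $u\in\WReach_{r'}[G,L_G,v]$; this contribution is bounded by $\wcol_{r'}(G)$. If instead $v$ is an internal subdivision vertex lying on the path that replaces some edge $e=\{x,y\}\in E(G)$, I would split $\WReach_{r'}[H,L_H,v]$ into (i) the $L_H$-smaller subdivision vertices on the same $e$-path, whose number is bounded by the subdivision length, and (ii) original vertices reached by $L_H$-monotone paths that must leave the $e$-path through $x$ or $y$, a contribution controlled by $\WReach_{r'}[G,L_G,x]\cup\WReach_{r'}[G,L_G,y]$ and hence by $\wcol_{r'}(G)$.

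The main obstacle will be the precise bookkeeping in case (ii): one must translate an $H$-path of length at most $r'$ that originates at a subdivision vertex into a $G$-walk of controlled length starting at an endpoint of the relevant subdivided edge, keeping track of how contracting internal subdivision vertices rescales path lengths in proportion to the subdivision length. Once this translation is in place, maximising over $v\in V(H)$ and collecting the two contributions yields the stated bound, with the key conceptual input being the placement of $V(G)$ as an initial segment of $L_H$ that forces every weak witness in $H$ to be governed by a weak witness in $G$.
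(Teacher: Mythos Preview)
The paper gives no proof of this lemma; it is grouped with the preceding lemma under the remark that both are ``easy consequences of the definitions.'' Your approach --- place $V(G)$ as an initial segment of $L_H$, then bound $\WReach_{r'}[H,L_H,v]$ separately for original and for subdivision vertices --- is the standard argument and is what the paper presumably has in mind.

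Two remarks. First, note that the lemma as printed is vacuous: the conclusion $\wcol_{r'}(H)\leq s+\wcol_{r'}(H)$ is trivially true, and the declared variables $r',s$ do not match the ``paths of length $r$'' in the hypothesis. From the way the lemma is applied in the proof of \Cref{lem:classgraph} (where an additive $+r'$ appears), the intended statement is something like $\wcol_{r'}(H)\leq c+\wcol_{r'}(G)$ for a small additive term; you are effectively proving this corrected version, which is the right thing to do. Second, in your case (ii) the contribution from original vertices is controlled by $\WReach_{r'}[G,L_G,x]\cup\WReach_{r'}[G,L_G,y]$, hence by $2\,\wcol_{r'}(G)$ rather than $\wcol_{r'}(G)$, since different targets $u$ may exit the $e$-path through different endpoints; this only affects the constant and is harmless for the application.
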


To estimate the size of $G'$ we reduce the 
group Steiner tree problems to simple Steiner tree problems
in a super-graph $\dot{G}$ of $G'$. 

\begin{graphgdot}
See Figure 1 for an illustration of the following 
construction. Let $G'$ with distinguished terminal vertices $C$ be as 
described
above. 
For each equivalence class $\kappa$ represented in $C$, 
fix some vertex $x_\kappa\in M_r^G(u,X)$
for $u\in \kappa$ which is of minimum distance to $u$ 
among all such choices (for our purpose we may assume
that the empty class with 
$M_r^G(u,X)=\emptyset$ is not realized in $G$). 

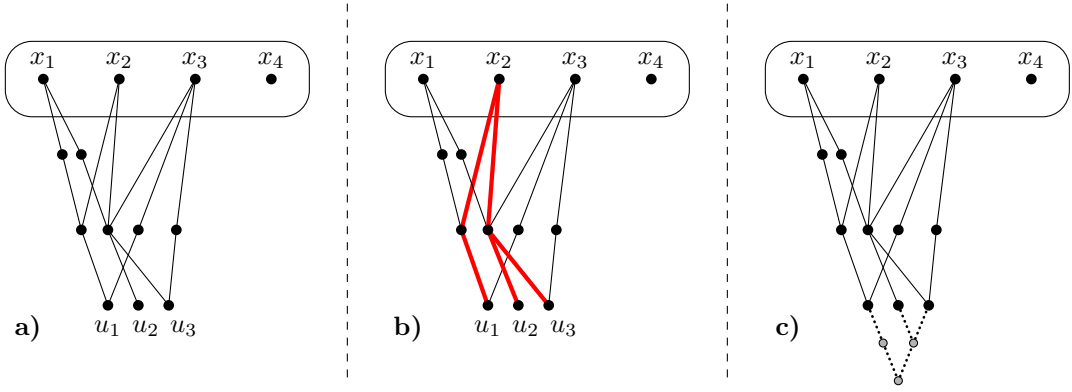
\begin{figure}
\begin{center}
  \begin{tikzpicture}[circle dotted/.style={dash pattern=on .05mm 
  off 2pt, line cap=round}]
  
  \node at (-0.2, -3.3) {\textbf{a)}};
  \fill[black] (0,0) circle (2pt); 
\fill[black] (1,0) circle (2pt); 
\fill[black] (2,0) circle (2pt); 
\fill[black] (3,0) circle (2pt); 
\draw[rounded corners=10] (-0.5,-0.5) rectangle (3.5,0.5);  
\node at (0,0.25) {$x_1$};
\node at (1,0.25) {$x_2$};
\node at (2,0.25) {$x_3$};
\node at (3,0.25) {$x_4$};

\fill[black] (0.25,-1) circle (2pt); 
\fill[black] (0.5,-2) circle (2pt); 
\fill[black] (0.85,-3) circle (2pt); 
\fill[black] (1.25,-2) circle (2pt); 
\draw[-] (0.85,-3) -- (0.5,-2) -- (0.25, -1) -- (0,0); 
\draw[-] (0.5,-2) -- (1,0);
\draw[-] (1.25,-2) -- (2,0);
\draw[-] (0.85,-3) -- (1.25,-2);

\draw[-] (1.25,-3) -- (0.85,-2) -- (0.5, -1) -- (0,0); 
\draw[-] (0.85,-2) -- (1,0);
\draw[-] (0.85,-2) -- (2,0);
\draw[-] (1.75,-2) -- (2,0);
\draw[-] (1.65,-3) -- (0.85,-2);
\draw[-] (1.65,-3) -- (1.75,-2);
\fill[black] (0.5,-1) circle (2pt); 
\fill[black] (0.85,-2) circle (2pt); 
\fill[black] (1.25,-3) circle (2pt); 
\fill[black] (1.75,-2) circle (2pt); 
\fill[black] (1.65,-3) circle (2pt); 

\node at (0.85,-3.3) {$u_1$};
\node at (1.35,-3.3) {$u_2$};
\node at (1.85,-3.3) {$u_3$};

\draw[dashed] (4,1) -- (4,-4);

\begin{scope}[xshift=5cm]
  \node at (-0.2, -3.3) {\textbf{b)}};

\draw[rounded corners=10] (-0.5,-0.5) rectangle (3.5,0.5);  
\node at (0,0.25) {$x_1$};
\node at (1,0.25) {$x_2$};
\node at (2,0.25) {$x_3$};
\node at (3,0.25) {$x_4$};

\draw[-,ultra thick,red] (0.85,-3) -- (0.5,-2);
\draw[-] (0.5,-2) -- (0.25, -1) -- (0,0); 
\draw[-,ultra thick,red] (0.5,-2) -- (1,0);
\draw[-] (1.25,-2) -- (2,0);
\draw[-] (0.85,-3) -- (1.25,-2);

\draw[-,ultra thick,red] (1.25,-3) -- (0.85,-2);
\draw[-] (0.85,-2) -- (0.5, -1) -- (0,0); 
\draw[-,ultra thick,red] (0.85,-2) -- (1,0);
\draw[-] (0.85,-2) -- (2,0);
\draw[-] (1.75,-2) -- (2,0);
\draw[-,ultra thick,red] (1.65,-3) -- (0.85,-2);
\draw[-] (1.65,-3) -- (1.75,-2);

  \fill[black] (0,0) circle (2pt); 
\fill[black] (1,0) circle (2pt); 
\fill[black] (2,0) circle (2pt); 
\fill[black] (3,0) circle (2pt); 
\fill[black] (0.25,-1) circle (2pt); 
\fill[black] (0.5,-2) circle (2pt); 
\fill[black] (0.85,-3) circle (2pt); 
\fill[black] (1.25,-2) circle (2pt); 

\fill[black] (0.5,-1) circle (2pt); 
\fill[black] (0.85,-2) circle (2pt); 
\fill[black] (1.25,-3) circle (2pt); 
\fill[black] (1.75,-2) circle (2pt); 
\fill[black] (1.65,-3) circle (2pt); 

\node at (0.85,-3.3) {$u_1$};
\node at (1.35,-3.3) {$u_2$};
\node at (1.85,-3.3) {$u_3$};
\end{scope}

\draw[dashed] (9,1) -- (9,-4);

\begin{scope}[xshift=10cm]
  \node at (-0.2, -3.3) {\textbf{c)}};

\draw[rounded corners=10] (-0.5,-0.5) rectangle (3.5,0.5);  
\node at (0,0.25) {$x_1$};
\node at (1,0.25) {$x_2$};
\node at (2,0.25) {$x_3$};
\node at (3,0.25) {$x_4$};

\draw[-] (0.85,-3) -- (0.5,-2);
\draw[-] (0.5,-2) -- (0.25, -1) -- (0,0); 
\draw[-] (0.5,-2) -- (1,0);
\draw[-] (1.25,-2) -- (2,0);
\draw[-] (0.85,-3) -- (1.25,-2);

\draw[-] (1.25,-3) -- (0.85,-2);
\draw[-] (0.85,-2) -- (0.5, -1) -- (0,0); 
\draw[-] (0.85,-2) -- (1,0);
\draw[-] (0.85,-2) -- (2,0);
\draw[-] (1.75,-2) -- (2,0);
\draw[-] (1.65,-3) -- (0.85,-2);
\draw[-] (1.65,-3) -- (1.75,-2);

  \fill[black] (0,0) circle (2pt); 
\fill[black] (1,0) circle (2pt); 
\fill[black] (2,0) circle (2pt); 
\fill[black] (3,0) circle (2pt); 
\fill[black] (0.25,-1) circle (2pt); 
\fill[black] (0.5,-2) circle (2pt); 
\fill[black] (0.85,-3) circle (2pt); 
\fill[black] (1.25,-2) circle (2pt); 

\fill[black] (0.5,-1) circle (2pt); 
\fill[black] (0.85,-2) circle (2pt); 
\fill[black] (1.25,-3) circle (2pt); 
\fill[black] (1.75,-2) circle (2pt); 
\fill[black] (1.65,-3) circle (2pt); 


\draw[line width = 1pt,circle dotted] (1.05,-3.5) -- (1.25,-4) -- (1.45,-3.5);
\draw[line width = 1pt,circle dotted] (1.05, -3.5) -- (0.85,-3);
\draw[line width = 1pt,circle dotted] (1.25, -3) -- (1.45,-3.5) -- (1.65,-3);
\fill[black!30!white, draw=black] (1.05,-3.5) circle (1.5pt); 
\fill[black!30!white, draw=black] (1.45,-3.5) circle (1.5pt); 
\fill[black!30!white, draw=black] (1.25,-4) circle (1.5pt); 
\end{scope}

  \end{tikzpicture}
\end{center}
\label[figure]{fig:dotG}
\caption{a) The vertices $u_1,u_2,u_3$ realize the same projection
profile $\rho_r^G[u_1,X]=(3,2,2,\infty)$. \\b) We have chosen 
$x_2$ as $x_\kappa$, which results in the indicated tree $T_\kappa$. c) A subdivided copy of $T_\kappa$ is added to $\dot{G}$.}
\end{figure}

Let $T_\kappa$ be a tree which contains
for each $u\in \kappa\cap C$ an $X$-avoiding path of minimum 
length between $u$ and $x_\kappa$ (e.g.\ obtained by  
an $X$-avoiding breadth-first search with root $x_\kappa$). 
Note that the vertices
of $\kappa\cap C$ appear as leaves of $T_\kappa$ and all 
leaves have the same distance from the root $x_\kappa$.
To see this, note that if a vertex $u$ of
$\kappa\cap C$ lies on a shortest path from~$x_\kappa$ to another
vertex $v$ of~$\kappa\cap C$, then the $X$-avoiding 
distance between $u$ and $x_\kappa$
is smaller than the $X$-avoiding distance between 
$v$ and $x_\kappa$, contradicting that all vertices of 
$\kappa\cap C$ have the same projection profile. Recall that 
by construction projection profiles are preserved for each 
vertex of $\kappa\cap C$.

Let $\dot{G}$ be the graph obtained by adding to $G'$ 
for each equivalence class $\kappa\cap C$ a 
copy of~$T_\kappa$, with each
each edge subdivided~$2r$
times. Then identify the leaves of this copy $T_\kappa$
with the respective vertices of $\kappa$.  
\end{graphgdot}

\begin{lemma}\label[lemma]{lem:classgraph}
There exists a function $f_\bullet$ such that for all 
$r'\in\N$ and all $\epsilon>0$ 
we have $\wcol_{r'}(\dot{G})\leq f_\bullet(r',t, \epsilon)
\cdot |X|^{1+\epsilon}$. 
\end{lemma}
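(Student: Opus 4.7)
The plan is to embed the \emph{unsubdivided} version $\dot{H}$ of $\dot{G}$---namely, the graph obtained by omitting the step of replacing each added tree edge with a path of length $2r+1$---as a subgraph of a lexicographic product $G^\star \bullet K_{\mu+1}$, where $G^\star$ is a suitable induced subgraph of $G$ of polynomial size in $|X|$ and $\mu$ is the number of classes of $\sim_{X,r}$ realized by vertices of $C$; by \Cref{lem:projection-complexity}, $\mu \leq \fproj(r, \epsilon_0) \cdot |X|^{1+\epsilon_0}$ for any $\epsilon_0 > 0$. Combining the lex-product bound \Cref{lem:wcollex} with the weak-coloring bound \Cref{lem:wcolbound} applied to $G^\star$, and finally invoking \Cref{lem:wcolsubdiv} to absorb the subdivision, will yield the claimed estimate.

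Concretely, take $V^\star := V(G') \cup \bigcup_\kappa V(T_\kappa) \subseteq V(G)$ and $G^\star := G[V^\star]$. Because each $T_\kappa$ consists of $X$-avoiding paths of length at most $r$ from $x_\kappa$ to the leaves in $\kappa \cap C \subseteq V(G')$, we have $|V(T_\kappa)\setminus V(G')| \leq r\cdot|\kappa\cap C|$ and hence $|V^\star| \leq (1+r)\cdot|V(G')|$, which by \Cref{lem:computeG_X} is polynomial in $|X|$ of degree depending only on $r$ and $t$. Let $\pi \colon V(\dot{H}) \to V(G)$ be the natural map that is the identity on $V(G') \subseteq V(\dot{H})$ and that sends each new vertex in the copy of $T_\kappa$ to the corresponding vertex of $T_\kappa \subseteq G$. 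Index the classes by $\{1, \ldots, \mu\}$ and define $\phi(v) := (v, 0)$ for $v \in V(G')$ and $\phi(v) := (\pi(v), \kappa)$ for each new vertex $v$ in the copy of $T_\kappa$. A short case analysis---edges of $G'$, edges internal to a tree copy, and tree-copy edges meeting a leaf-identified vertex---shows that $\phi$ is injective and embeds $\dot{H}$ as a subgraph of $G^\star \bullet K_{\mu+1}$; the key point is that for every edge of $\dot{H}$ the two endpoints either share the second coordinate under $\phi$ or have first coordinates forming an edge of $G^\star$ (since $T_\kappa \subseteq G[V^\star]$).

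With the embedding in hand, \Cref{lem:wcollex} gives $\wcol_s(G^\star \bullet K_{\mu+1}) \leq (\mu+1)\cdot \wcol_s(G^\star)$, and \Cref{lem:wcolbound} applied to $G^\star \subseteq G \in \CCC$ yields $\wcol_s(G^\star) \leq \fwcol(s, \epsilon_1) \cdot |V^\star|^{\epsilon_1}$ for any $\epsilon_1 > 0$. Choosing $\epsilon_0$ and $\epsilon_1$ small enough to absorb the constant-degree polynomial blow-up of $|V^\star|$ into the target exponent (via the rescaling observation made after \Cref{lem:wcolbound}) yields $\wcol_s(\dot{H}) \leq h(r, s, t, \epsilon) \cdot |X|^{1 + \epsilon}$. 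Finally, since $\dot{G}$ is obtained from $\dot{H}$ by replacing each tree edge with a path of length $2r + 1$, \Cref{lem:wcolsubdiv} converts this into $\wcol_{r'}(\dot{G}) \leq f_\bullet(r', t, \epsilon) \cdot |X|^{1 + \epsilon}$ for an appropriate function $f_\bullet$.

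The main obstacle is that $\dot{G}$ contains one separate copy of each internal $T_\kappa$-vertex per class $\kappa$, which prevents a direct appeal to weak-coloring bounds on any single fixed subgraph of $G$. The lex-product view resolves this by using the second coordinate purely as a class label while the first coordinate ranges over the single, polynomial-size graph $G^\star$: the factor $\mu + 1$ from \Cref{lem:wcollex} is precisely what converts the arbitrarily small exponent $|X|^{\epsilon_1}$ on $\wcol_s(G^\star)$ into the target $|X|^{1+\epsilon}$.
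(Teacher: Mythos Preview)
Your proposal is correct and follows essentially the same strategy as the paper: embed the unsubdivided version of $\dot{G}$ into a lexicographic product of a polynomial-size subgraph of $G$ with a complete graph indexed by the projection classes, bound the weak coloring number via \Cref{lem:wcollex} and \Cref{lem:wcolbound}, and finish with \Cref{lem:wcolsubdiv}. The only cosmetic difference is that the paper uses $G'\bullet K_x$ together with the pre-packaged bound \Cref{lem:wcolGX}, whereas you pass to the slightly enlarged $G^\star$ and appeal to \Cref{lem:wcolbound} directly; your version has the minor advantage of making the containment of each $T_\kappa$ in the base graph of the product explicit.
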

\begin{proof}
Let $\epsilon'\coloneqq \epsilon/2$. 
According to \Cref{lem:projection-complexity}, there is a
function $\fproj$ such that there are at most 
$\fproj(r,\epsilon')\cdot|X|^{1+\epsilon'}\eqqcolon x$ distinct 
projection profiles. When constructing the graph~$\dot{G}$, we hence create at most so many
trees $T_\kappa$. These can be found as disjoint 
subgraphs in $G'\bullet K_x$. Hence, $\dot{G}$ is a 
subgraph of a $2r$-subdivision of
$G'\bullet K_x$. 
According to \Cref{lem:wcolGX}, \Cref{lem:wcollex} 
and \Cref{lem:wcolsubdiv}
we have $\wcol_{r'}(\dot{G})\leq 
h(r,r',t,\epsilon')\cdot |X|^{\epsilon'}\cdot \fproj(r,\epsilon')\cdot 
|X|^{1+\epsilon'}+r'$, where $h$ is the function from 
\Cref{lem:wcolGX}. Assuming that each of these terms
is at least $1$, we can define $f_\bullet(r',t,\epsilon)\coloneqq 
r'\cdot h(r',t,\epsilon')\cdot \fproj(r,\epsilon')$.
\end{proof}

\begin{lemma}\label{lem:translateSteiner}
With each group Steiner tree problem for $\YYY$, we associate
the Steiner tree problem for the set~$Y$ which contains exactly the 
roots of the subdivided trees $T_\kappa$ for each 
$\kappa\in \YYY$. Denote this root by $v_\kappa$
(it is a copy of $x_\kappa$). 
Denote by $d_\kappa$ the distance from $v_\kappa$ 
to $x_\kappa$. Then every group Steiner tree $T_\YYY$ for 
$\YYY$
of size $s\leq 2t$ in $G$ gives rise to a Steiner 
tree for $Y$ of size $s+\sum_{\kappa\in \YYY} d_\kappa$ in 
$\dot{G}$. Vice versa, every Steiner tree for a set~$Y$ of
the above form
of size $s+\sum_{\kappa\in \YYY} d_\kappa$ in $\dot{G}$ 
gives rise to a group Steiner tree of size $s$ for $\YYY$
in $G$. 
\end{lemma}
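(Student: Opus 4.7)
The plan is to prove both directions by explicit transformations between group Steiner trees for $\YYY$ in $G$ and ordinary Steiner trees for $Y$ in $\dot{G}$. The key structural observation is that in $\dot{G}$, the vertex $v_\kappa$ has all of its incident edges inside the subdivided copy of $T_\kappa$, and that the subdivided copies $T_\kappa$ and $T_{\kappa'}$ for distinct classes are internally vertex-disjoint (their only possible overlaps with $G'$ or with each other are at leaves, which are identified with the original vertices of the various $\kappa$'s).

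For the forward direction, the hypothesis $s\le 2t$ implies $\mathbf{st}_G(\YYY)\le 2t$, so the construction of $G'$ has fixed a group Steiner tree $T^\star\subseteq G'\subseteq \dot{G}$ of size at most $s$. For each $\kappa\in\YYY$ pick a terminal $u_\kappa\in \kappa\cap V(T^\star)$; in $\dot{G}$ this $u_\kappa$ is identified with a leaf of the subdivided copy of $T_\kappa$. Augment $T^\star$ by appending, for every $\kappa$, the unique path in that subdivided copy from $u_\kappa$ up to the root $v_\kappa$. By the disjointness observation above these paths are pairwise internally vertex-disjoint and meet $T^\star$ only at the $u_\kappa$'s, so the result is a connected subtree of $\dot{G}$ containing all vertices of $Y$, and its size is the claimed $s+\sum_{\kappa\in\YYY} d_\kappa$.

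For the reverse direction, take a Steiner tree $T'\subseteq \dot{G}$ for $Y$. Since $v_\kappa$'s $\dot{G}$-neighborhood lies entirely inside the subdivided copy of $T_\kappa$, $T'$ must contain a path descending from $v_\kappa$ through that copy to at least one of its leaves $u_\kappa\in\kappa$. Excise from $T'$ every internal (non-leaf) vertex of every subdivided copy of $T_\kappa$; since each subdivided copy is a pendant attachment to $G'$ glued only at its leaves, removing its interior cannot disconnect what remains. The resulting subgraph lies in $G'\subseteq G$, meets every class $\kappa\in\YYY$ (through its retained leaf $u_\kappa$), and, after taking a spanning subtree if necessary, is a group Steiner tree for $\YYY$ of size $s'-\sum_{\kappa}d_\kappa$ where $s'=|V(T')|$.

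The main technical point is bookkeeping: in each direction we must verify that the transformed object is a tree of the claimed size. This reduces to the internal disjointness of the distinct subdivided copies $T_\kappa$, the fact that their internal vertices are new vertices not appearing in $G'$, and the observation that leaves are the sole interface between any subdivided $T_\kappa$ and $G'$. No deeper combinatorial insight is required beyond carefully tracking which vertices are added (forward) or deleted (reverse).
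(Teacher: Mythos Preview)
Your forward direction is fine and matches the paper's (which just says ``clear''): the given group Steiner tree witnesses $\mathbf{st}_G(\YYY)\le 2t$, so the fixed minimum tree $T_\YYY\subseteq G'$ is available, and appending one root--leaf path per class gives the required Steiner tree in~$\dot G$.

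The reverse direction, however, has a genuine gap. Your claim that ``removing its interior cannot disconnect what remains'' is false as stated. The subdivided copy of $T_\kappa$ is \emph{not} a pendant attachment: it is glued to $G'$ at \emph{all} leaves in $\kappa\cap C$, not at a single point. Hence $T'$ may enter the subdivided copy at one leaf~$u_1$, travel up to~$v_\kappa$, and descend to a second leaf~$u_2$; in $T'$ this detour is the \emph{only} connection between the subtree hanging off~$u_1$ and the subtree hanging off~$u_2$. Excising the interior vertices of the subdivided $T_\kappa$ then disconnects~$T'$, so what remains is a forest, not a group Steiner tree. Your size accounting is off for the same reason: you implicitly assume $T'$ contributes exactly~$d_\kappa$ vertices inside each subdivided copy, but a branching $T'$ contributes more.

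The paper's proof fills precisely this gap, and it is where the $2r$-subdivision is actually used. If a \emph{minimum} Steiner tree $T_Y$ used two root--leaf paths $P_1,P_2$ to distinct leaves $u_1,u_2\in\kappa$, then $|V(P_1)\cup V(P_2)|\ge d_\kappa+2r$ because of the subdivision; but $u_1$ and $u_2$ are connected in $G'\subseteq\dot G$ through $x_\kappa$ by a path of length at most~$2r$, i.e.\ at most $2r-1$ vertices, so replacing one branch by this path would shrink the tree --- contradicting minimality. Thus a minimum $T_Y$ uses exactly one root--leaf path per class, and only then does your excision argument go through. Without invoking minimality together with the $2r$-subdivision, the reverse direction does not stand.
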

\begin{proof}
The forward direction is clear. Conversely, 
let $T_Y$ be a Steiner tree for a set $Y$ which contains
only roots of subdivided trees $T_\kappa$
of size 
$s+\sum_\kappa d_\kappa$ in $\dot{G}$. 
We claim that $T_Y$ uses exactly $d_\kappa$
vertices of $T_\kappa$, more precisely, $T_Y$ 
connects exactly one vertex $u\in \kappa$ 
with~$v_\kappa$. Assume $T_Y$ contains two paths $P_1,P_2$
between $v_\kappa$ and vertices $u_1,u_2$ from $\kappa$. 
Because we work with a $2r$-subdivision of $T_\kappa$, 
we have $|V(P_1)\cup V(P_2)|\geq d\kappa+2r$. However, 
there is a path between $u_1$ and $u_2$ via $x_\kappa$
of length at most $2r$ (which uses only $2r-1$ vertices) in 
$\dot{G}$, contradicting the fact that $T_Y$ uses
a minimum number of vertices. 
\end{proof}

\begin{lemma}
There is a function $f$ such that for every $\epsilon>0$
the graph $\dot{G}$ contains at most $f(r,t,\epsilon)\cdot 
|X|^{2+\epsilon}$ vertices. 
\end{lemma}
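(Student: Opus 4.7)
The plan is to bound $|V(\dot{G})|$ by decomposing its vertex set into contributions from (1) the set $X$, (2) the union of group Steiner trees $\bigcup_{\YYY} V(T_\YYY)$, (3) the profile-preserving BFS-trees added for each terminal $u\in C$, and (4) the $2r$-subdivided trees $\widetilde{T}_\kappa$ attached for each projection class $\kappa$ represented in $C$. By \Cref{lem:projection-complexity} the number of projection classes is at most $\fproj(r,\epsilon')\cdot|X|^{1+\epsilon'}$ for any $\epsilon'>0$; each BFS-tree contributes at most $r|X|+1$ new vertices and each $\widetilde{T}_\kappa$ contributes at most $O(r^2)\cdot|\kappa\cap C|$ new vertices, so parts (3) and (4) together contribute at most $O(r|X|)\cdot|C|$ vertices. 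The central task is thus to control $|C|$ and $|\bigcup_{\YYY} V(T_\YYY)|$.

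Using \Cref{lem:translateSteiner} I would reinterpret each $\YYY$ with $\mathbf{st}_G(\YYY)\le 2t$ as a subset $Y=\{v_\kappa\suchthat\kappa\in\YYY\}$ of class representatives with $\mathbf{st}_{\dot{G}}(Y)\le R:=2t\bigl(1+r(2r+1)\bigr)$, turning the group Steiner tree problem in $G$ into an ordinary Steiner tree problem in $\dot{G}$. I would then invoke \Cref{lem:classgraph} to obtain $\wcol_R(\dot{G})\le f_\bullet(R,t,\epsilon')\cdot|X|^{1+\epsilon'}$, which in particular makes $\dot{G}$ and its $R$-closure $O(|X|^{1+\epsilon'})$-degenerate.

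To bound $|\bigcup_{\YYY} V(T_\YYY)|$ I would decompose each Steiner tree $T_Y$ into at most $2t-1$ paths between pairs of elements of $Y$, each of length at most $R$. The number of pairs of class representatives at $\dot{G}$-distance at most $R$ equals the edge count of the $R$-closure of $\dot{G}$ restricted to the set of class representatives, which by degeneracy is at most $\wcol_R(\dot{G})\cdot\fproj\cdot|X|^{1+\epsilon'}\le f_\bullet\cdot\fproj\cdot|X|^{2(1+\epsilon')}$; each pair contributes at most $R$ vertices through its canonical shortest path, yielding a bound of $O(R\cdot f_\bullet\cdot\fproj)\cdot|X|^{2(1+\epsilon')}$. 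For $|C|$ I would argue that one may canonicalize the choice of minimum Steiner trees $T_\YYY$ so that each class $\kappa$ contributes only a single terminal (namely the representative $x_\kappa$), giving $|C|\le\fproj\cdot|X|^{1+\epsilon'}$.

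The main obstacle I expect is precisely this canonicalization step: arguing that the minimum group Steiner trees $T_\YYY$ can be chosen so that each class has a unique terminal (so $|C|$ is controlled by the number of projection classes rather than by $2t\cdot|\{\YYY\}|$) and so that the vertex sets of the $T_\YYY$ lie in a union of canonical shortest paths between class-representative pairs. With these reductions in place, combining the above bounds and setting $\epsilon'=\epsilon/3$ yields $|V(\dot{G})|\le f(r,t,\epsilon)\cdot|X|^{2+\epsilon}$, where $f$ absorbs the constants $\fproj$, $f_\bullet$, $R$ and polynomial factors in $r$ and $t$.
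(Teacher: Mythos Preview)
Your overall strategy—using \Cref{lem:translateSteiner} to convert each group Steiner tree $T_\YYY$ into an ordinary Steiner tree $T_Y$ in $\dot{G}$ on the set of roots $\{v_\kappa\}$, decomposing $T_Y$ into short paths between pairs of roots, and invoking \Cref{lem:classgraph} to control $\wcol_R(\dot{G})$—coincides with the paper's. The divergence, and the genuine gap, is in how you turn the weak-colouring bound into a vertex count.

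The step from $\wcol_R(\dot{G})\le f_\bullet\cdot|X|^{1+\epsilon'}$ to ``$\dot{G}$ and its $R$-closure are $O(|X|^{1+\epsilon'})$-degenerate'' is incorrect: a bound on $\wcol_R$ does \emph{not} bound the degeneracy of the $R$-th power. The star $K_{1,n}$ has $\wcol_2(K_{1,n})=2$, yet $(K_{1,n})^2=K_{n+1}$ has degeneracy~$n$. Consequently the number of pairs of roots at $\dot{G}$-distance at most $R$ is in general \emph{not} bounded by $\wcol_R(\dot{G})$ times the number of roots, and your bound on $\bigl|\bigcup_\YYY V(T_\YYY)\bigr|$ via ``(number of pairs)${}\times R$'' collapses. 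The paper never counts root--root pairs. It uses \Cref{lem:wcol-sep} instead: on each path $P_{uv}$ inside some $T_Y$ (of length at most $4r^2+2t$) the $L$-minimum vertex $z$ lies in $\WReach_{4r^2+2t}[\dot{G},L,u]\cap\WReach_{4r^2+2t}[\dot{G},L,v]$, and the two halves of $P_{uv}$ are charged to $u$ and to $v$. For a fixed root $u$ there are at most $\wcol_{4r^2+2t}(\dot{G})\le f_\bullet\cdot|X|^{1+\epsilon'}$ possible midpoints $z$, and there are at most $\fproj\cdot|X|^{1+\epsilon'}$ roots; the product gives the $|X|^{2+\epsilon}$ bound (the constant path length is absorbed into $f$). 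So the correct combinatorial object is the set of (root, weakly-reachable-midpoint) pairs, not root--root pairs.

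This charging also removes the need for your canonicalisation step. Every terminal in $C$ lies on some $T_\YYY$ and hence on some Steiner tree $T_Y$ in $\dot{G}$, so $|C|$ is already absorbed in the Steiner-tree count; you do not have to argue that the minimum trees $T_\YYY$ can be chosen with a single terminal per projection class—a claim which, as you rightly suspect, does not follow from minimality and for which no justification is available.
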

\begin{proof}
Let $\epsilon'\coloneqq \epsilon/2$. 
Every Steiner tree $T_Y$ that connects a 
subset $Y$ decomposes into paths~$P_{uv}$ between pairs 
$u,v\in Y$. 
According to \Cref{lem:wcol-sep}, each such path $P_{uv}$ 
contains a vertex $z$ which is weakly 
$(4r^2+2t)$-reachable from $u$ and from $v$. 
This is because each Steiner tree in $\dot{G}$ connecting
$u$ and $v$ contains a path of length at most $2r^2$ 
between $u$ and some leaf $u_\kappa\in \kappa\cap C$
(and analogously a path of length at most $2r^2$ 
between $v$ and some leaf $v_\kappa\in \kappa\cap C$). 
Now $u_\kappa$ and $v_\kappa$ are connected by a path
of length at most $2t$ by construction. 

Denote by $Q_u$
and $Q_v$, respectively, 
the sub-path of $P_{uv}$ between $u$ and $z$, and $v$ and $z$, 
respectively. We charge the vertices of $Q_u$ to vertex $u$
and the vertices of $Q_v$ to vertex $v$ (and the vertex $z$ 
to one of the two). According to \Cref{lem:classgraph}, 
each vertex weakly $(4r^2+2t)$-reaches at most $f_\bullet(4r^2+2t,t,\epsilon')
\cdot 
|X|^{1+\epsilon'}$ vertices which can play the role of $z$. 
According to \Cref{lem:projection-complexity} we have 
at most $\fproj(r,\epsilon')\cdot |X|^{1+\epsilon'}$ choices for
$u,v\in Y$. Hence we obtain that all Steiner trees add up to at most 
$\fproj(r,\epsilon')\cdot |X|^{1+\epsilon'}\cdot f_\bullet(4r^2+2t,t,\epsilon')
\cdot 
|X|^{1+\epsilon'}\eqqcolon f(r,t,\epsilon)\cdot |X|^{2+\epsilon}$
vertices. 
\end{proof}

As $G'$ is a subgraph of $\dot{G}$, we conclude
that also $G'$ is small. 

\begin{corollary}
There is a function $f$ such that for every $\epsilon>0$
the graph $\dot{G}$ has size at most $f(r,t,\epsilon)\cdot 
|X|^{2+\epsilon}$. 
\end{corollary}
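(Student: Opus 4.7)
The plan is to invoke the preceding lemma and then pass to the subgraph $G' \subseteq \dot G$. Concretely, I would first recall the construction of $\dot G$ from the block titled ``The graph $\dot G$'': starting from $G'$, for each projection class $\kappa$ represented in $C$ one adds a copy of the tree $T_\kappa$ with every edge subdivided $2r$ times, and then identifies the leaves of the copy with the vertices of $\kappa$. In particular no vertex of $G'$ is ever removed, so $V(G') \subseteq V(\dot G)$ and thus $|V(G')| \leq |V(\dot G)|$.

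Second, I would apply the preceding lemma, which establishes that $|V(\dot G)| \leq f(r,t,\epsilon)\cdot |X|^{2+\epsilon}$ for every $\epsilon > 0$ and some function $f$. Chaining the two facts yields the desired bound on $|V(G')|$, which is what the corollary actually asserts as indicated by the preceding sentence ``As $G'$ is a subgraph of $\dot G$, we conclude that also $G'$ is small'' (so the symbol $\dot G$ in the corollary statement should be read as $G'$).

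There is essentially no obstacle in this step; the corollary is a one-line consequence of the preceding lemma together with the trivial subgraph relation. All of the substantive work has already been done: the path-decomposition argument combining \Cref{lem:wcol-sep} (to locate, on each Steiner-path between two terminals, a vertex that is weakly $(4r^2+2t)$-reachable from both endpoints), \Cref{lem:classgraph} (to bound via the weak coloring number of $\dot G$ the number of such candidate meeting vertices by $f_\bullet(4r^2+2t,t,\epsilon')\cdot |X|^{1+\epsilon'}$), and \Cref{lem:projection-complexity} (to bound the number of choices of the endpoint projection classes by $\fproj(r,\epsilon')\cdot |X|^{1+\epsilon'}$) is precisely what delivered the near-quadratic bound $f(r,t,\epsilon)\cdot |X|^{2+\epsilon}$ on $|V(\dot G)|$. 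The corollary merely transports that bound from $\dot G$ to its subgraph $G'$.
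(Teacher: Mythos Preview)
Your proposal is correct and matches the paper's intended argument exactly: the corollary is an immediate consequence of the preceding lemma together with the subgraph inclusion $G' \subseteq \dot G$, and you rightly observe that the symbol $\dot G$ in the corollary statement is a typo for $G'$ (as the preceding sentence in the paper makes clear). The paper does not even write out a proof for this corollary, so your one-line derivation is precisely what is intended.
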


This was the last missing statement of~\Cref{lem:tree-closure}, 
which finishes the proof. 

\section{Lower bounds}\label{sec:lower-bounds}

Our lower bound is based on Proposition~3.2 of \cite{lokshtanov2016lossy} which establishes
 equivalence between FPT-approximation
algorithms and approximate kernelization.

\begin{lemma}[Proposition~3.2 of \cite{lokshtanov2016lossy}]\label[lemma]{lemma:fpt-approx}
For every function $\alpha$ and decidable parameterized 
optimization problem $\Pi$,
$\Pi$ admits a fixed parameter tractable $\alpha$-approximation algorithm if and only if $\Pi$ has an $\alpha$-approximate kernel.
\end{lemma}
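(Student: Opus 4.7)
The plan is to prove both directions of the equivalence. For the easy direction, suppose $\Pi$ has an $\alpha$-approximate kernel $\mathcal{A}$. On input $(I,k)$, I would first apply the reduction algorithm of $\mathcal{A}$ in polynomial time to obtain $(I',k')$ with $|I'|+k' \leq g(k)$ for some computable $g$. Since $\Pi$ is decidable and the size of $(I',k')$ is bounded solely in terms of $k$, I can compute an optimal solution $s'^\app$ for $(I',k')$ by brute force in time $h(k)$ for some computable function $h$. Applying the solution lifting algorithm to $(I,k),(I',k'),s'^\app$ yields a solution $s$ with
\[
\frac{\Pi(I,k,s)}{\Opt(I,k)} \leq \alpha(k) \cdot \frac{\Pi(I',k',s'^\app)}{\Opt(I',k')} = \alpha(k),
\]
and the total running time is FPT.

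For the harder direction, suppose $\Pi$ admits an FPT $\alpha$-approximation algorithm $\mathcal{B}$ running in time $f(k)\cdot |I|^c$. The idea is a time-based dichotomy: the reduction algorithm on $(I,k)$ simulates $\mathcal{B}$ for at most $|I|^{c+1}$ steps. If $\mathcal{B}$ terminates, producing an $\alpha$-approximate solution $s^\app$, then output some fixed trivial instance $(I_0,k_0)$ of constant size (say, a canonical instance whose optimum and value are well-defined and equal). Otherwise, $f(k) > |I|$, so $|I|$ is already bounded by a function of $k$, and one outputs $(I,k)$ itself (plus a flag indicating this branch).

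The solution lifting algorithm inspects which branch was taken, which can be recorded in the output instance. In the first branch, it re-runs $\mathcal{B}$ on $(I,k)$ for at most $|I|^{c+1}$ steps to recover $s^\app$; this is polynomial in $|I|$ and $k$, as required. Since $s^\app$ is $\alpha$-approximate,
\[
\frac{\Pi(I,k,s^\app)}{\Opt(I,k)} \leq \alpha(k) \leq \alpha(k)\cdot \frac{\Pi(I',k',s')}{\Opt(I',k')},
\]
using that the right-hand ratio is at least $1$ regardless of the given $s'$. In the second branch, $(I',k')=(I,k)$, so lifting is the identity and the ratio is exactly $1$. The output instance has size bounded by $\max(|I_0|+k_0,\, f(k))$, giving a kernel.

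The main subtlety, and the only place where care is needed, is arranging the solution lifting algorithm so that it correctly distinguishes the two branches and stays within polynomial time: recording the branch in the output instance handles the former, while the timeout at $|I|^{c+1}$ guarantees the latter, since a non-termination certifies $|I| < f(k)$ and a termination certifies that $\mathcal{B}$ can indeed be re-run in time polynomial in $|I|$. All remaining checks are routine verifications that the size bound and approximation ratio conditions of the definition of an $\alpha$-approximate kernel are met.
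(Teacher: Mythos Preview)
The paper does not prove this lemma; it is quoted verbatim as Proposition~3.2 of Lokshtanov et al.~\cite{lokshtanov2016lossy} and used as a black box in Section~\ref{sec:lower-bounds}. So there is nothing in the paper to compare your argument against.

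That said, your sketch is the standard proof of this equivalence (it mirrors the classical ``FPT $\Leftrightarrow$ kernelizable'' argument). Both directions are correct. Two minor remarks: first, you do not actually need to record a flag in the output instance, since the solution lifting algorithm receives the original $(I,k)$ as part of its input and can therefore re-run the reduction to determine which branch was taken; second, the claim that $\Pi(I',k',s')/\Opt(I',k')\geq 1$ for any $s'$ is immediate for minimization problems from the definition of $\Opt$, and the dual inequality holds for maximization (where the ratio in the definition is inverted). With these clarifications your argument goes through.
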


We will use 
a reduction from set cover to the distance-$r$ dominating
set problem. Recall that the instance of the Set Cover problem
consists of $(U, \FFF, k)$, where $U$ is a finite universe, $\FFF\subseteq 2^U$ is a family of subsets of the universe, and
$k$ is a positive integer. The question is whether there exists a subfamily $\GGG \subseteq \FFF$ of size $k$ such that every
element of $U$ is covered by~$\GGG$, i.e., $\bigcup G=U$. 
The following result states that under complexity 
theoretic assumptions for the set cover problem
on general graphs
there does not exist a fixed-parameter tractable $\alpha$-approximation algorithm for any function $\alpha$. 

\begin{lemma}[Chalermsook et al.~\cite{chalermsook17}]\label[lemma]{lemma:fpt-approx-lowerbound}
If the Gap Exponential Time Hypothesis (gap-ETH) holds, 
then there is no fixed parameter tractable $\alpha$-approximation algorithm for the
set cover problem, for any function~$\alpha$.
\end{lemma}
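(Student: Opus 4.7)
The statement is the Chalermsook et al.~result, so the ``proof'' I propose is really the reduction chain underlying their theorem; I will sketch how I would reconstruct it, starting from gap-ETH and ending at Set Cover.

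The plan is to produce, for every computable $\alpha\colon\N\rightarrow\R$, a reduction that turns an $n$-variable $3$-\textsc{Sat} instance $\phi$ into a Set Cover instance $(U,\FFF,k)$ whose parameter $k$ depends only on~$\alpha$ (together with the gap-ETH constant $\epsilon$), whose size is subexponential in $n$, and such that \textbf{(yes)}\ if $\phi$ is satisfiable then $(U,\FFF)$ is covered by $k$ sets while \textbf{(no)}\ if $\phi$ is $(1-\epsilon)$-far from satisfiable then every cover needs more than $\alpha(k)\cdot k$ sets. Given an alleged FPT-$\alpha$-approximation algorithm running in time $g(k)\cdot|U|^{\Oof(1)}$, one runs it on the constructed instance; the gap forces it to distinguish the two cases, and since $k$ depends only on $\alpha$, the total running time is subexponential in $n$, contradicting gap-ETH.

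The construction itself factors into three classical steps. First, I would boost gap-ETH from $3$-\textsc{Sat} to a gap version of \textsc{MaxCover} (a one-round two-prover problem, equivalent to \textsc{Label Cover}) via the standard Feige-style reduction; this yields, for every constant $\delta>0$, instances of \textsc{MaxCover} of size $n^{\Oof(1/\delta)}$ on which distinguishing value $1$ from value~$\delta$ requires time $2^{n^{\Omega(1)}}$ under gap-ETH. Second, I would apply a product/self-improving trick: take the $h$-fold product of the \textsc{MaxCover} instance, which shrinks the soundness to $\delta^h$ while blowing up the left side of the bipartite constraint graph only mildly; choosing $h$ as a function of $\alpha$ produces a gap as large as desired. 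Third, I would apply the Feige-type reduction from gap-\textsc{MaxCover} to Set Cover using a ``partition system'' (Lund--Yannakakis; Feige~1998): for each right vertex place a copy of a $(m,L)$-universe that can be covered by $L$ sets corresponding to correct labels but otherwise requires $\Omega(\log m)$ sets. The parameter $k$ of the resulting Set Cover instance is controlled by the number of left-vertices of the product instance (each contributing one set to a canonical cover), and crucially $k$ is a function of $\alpha$ and $\epsilon$ only, not of $n$.

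The hard part, and where I would spend most care, is keeping the parameter $k$ genuinely independent of $n$ while making the soundness gap exceed the guessed $\alpha(k)$. This is exactly the reason the argument needs gap-ETH rather than plain ETH: without the initial constant gap, each step of parallel repetition would destroy the subexponential size bound, and no bounded $k$ would be achievable. Concretely, I would first pick $h$ large enough that $\delta^h < 1/\alpha(k_0)$, where $k_0$ is the parameter produced by the Feige reduction from the unrepeated instance; then verify that repetition multiplies $k$ only by a controlled factor, so one can close the loop and solve simultaneously for $h$ and $k$. The remaining subtlety is a mild derandomization of the partition system so that the whole pipeline runs in deterministic polynomial time in $|U|$, which can be done by the standard $\epsilon$-biased set construction and does not affect the parameter bound. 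Putting these pieces together gives the required contradiction with gap-ETH, establishing the lemma.
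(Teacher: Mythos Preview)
The paper does not prove this lemma at all: it is stated with attribution to Chalermsook et al.\ and used as a black box, alongside the other cited lemmas in Section~\ref{sec:lower-bounds}, to assemble the final hardness theorem. So there is no ``paper's own proof'' to compare against; your write-up goes well beyond what the paper does.

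As a reconstruction of the Chalermsook et al.\ argument your outline is in the right spirit---start from the constant gap provided by gap-ETH, pass through a Label-Cover/\textsc{MaxCover} intermediate, amplify, and finish with a Feige/Lund--Yannakakis partition-system reduction to Set Cover---and you correctly identify the crux: keeping the parameter $k$ a function of $\alpha$ and $\epsilon$ alone while the instance size stays subexponential in $n$. One point to be careful about is the amplification step: the actual paper avoids parallel repetition precisely because the usual repetition blows up the left side (and hence $k$) too fast; their argument instead exploits the structure of \textsc{MaxCover} more directly to get the required inapproximability with bounded $k$. Your ``$h$-fold product'' sketch and the sentence ``repetition multiplies $k$ only by a controlled factor, so one can close the loop and solve simultaneously for $h$ and $k$'' is exactly where the real technical work lives, and as written it is a hope rather than an argument---naive products typically make $k$ grow with $h$ in a way that prevents closing the loop for arbitrary $\alpha$. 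If you want a self-contained proof you should either follow the \textsc{MaxCover} formulation of Chalermsook et al.\ or be explicit about which product you take and why $k$ stays bounded.
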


By definition of nowhere dense graph classes, if 
$\CCC$ is somewhere dense (that is, not nowhere dense), 
then for some $r\in \N$ we find the $r$-subdivision 
of every graph as a subgraph of a graph in $\CCC$. 
For $p\geq 0$, let $\mathcal{H}_p$ be the class of 
$p$-subdivisions of all simple graphs, that is, the class 
comprising all the graphs that can be obtained from 
any simple graph by replacing every edge by a path of
length $p$. As our definition of nowhere denseness
in the introduction is not the standard definition 
but tailored to the following hardness reduction, 
we give reference to the following lemma. 


\begin{lemma}[Ne\v{s}et\v{r}il and Ossona de Mendez~\cite{nevsetvril2011nowhere}]\label[lemma]{lemma:somewheredense}
For every monotone somewhere dense graph class~$\CCC$, there exists $r\in\N$ such 
that $\mathcal{H}_r\subseteq \CCC$. 
\end{lemma}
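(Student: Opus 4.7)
The plan is to deduce the lemma from the paper's own definition of somewhere denseness by a contrapositive/compactness argument, together with monotonicity of $\CCC$. Write $H^{(r)}$ for the $r$-subdivision of a graph~$H$. The goal is to find a single $r \in \N$ such that $H^{(r)} \in \CCC$ for every simple graph~$H$.

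First, I would contrapositively suppose that \emph{no} such single $r$ works, i.e.\ that for every $r \in \N$ there exists some integer $k_r$ such that $K_{k_r}^{(r)} \notin \CCC$. Since $\CCC$ is monotone (closed under subgraphs), this is equivalent to saying $K_{k_r}^{(r)} \not\subseteq G$ for every $G \in \CCC$. Now define $t\colon \N \to \N$ by $t(r) \coloneqq k_r$. Then by construction, for every $r \in \N$ and every $G \in \CCC$, the graph $K_{t(r)}^{(r)}$ is not a subgraph of $G$. By the definition of nowhere denseness recalled in the paper, this would make $\CCC$ nowhere dense, contradicting the hypothesis that $\CCC$ is somewhere dense. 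Hence there must exist an $r \in \N$ such that $K_k^{(r)} \in \CCC$ for every $k \in \N$ (using monotonicity to pass from ``subgraph of some member of $\CCC$'' to ``member of $\CCC$'').

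Fix this $r$. For an arbitrary simple graph $H$ on $n$ vertices we have $H \subseteq K_n$, and replacing each edge of $H$ by the corresponding length-$r$ path of the subdivision yields $H^{(r)} \subseteq K_n^{(r)}$. Since $K_n^{(r)} \in \CCC$ and $\CCC$ is monotone, $H^{(r)} \in \CCC$. As $H$ was arbitrary, this shows $\mathcal{H}_r \subseteq \CCC$, proving the lemma.

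The whole argument is essentially a one-line swap of quantifiers enabled by monotonicity; there is no genuine obstacle. The only mild subtlety worth checking is that $H^{(r)} \subseteq K_n^{(r)}$ really does hold as a subgraph relation (the internal vertices of the subdivided paths corresponding to non-edges of $H$ are simply absent in $H^{(r)}$, which is fine), and that the existential ``$K_{t(r)}^{(r)} \subseteq G$ for some $G\in\CCC$'' can be upgraded via monotonicity to ``$K_{t(r)}^{(r)} \in \CCC$'', which is exactly the reason the hypothesis that $\CCC$ is closed under taking subgraphs is needed.
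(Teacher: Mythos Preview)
Your argument is correct. The paper does not actually supply its own proof of this lemma; it simply cites Ne\v{s}et\v{r}il and Ossona de Mendez, noting that the definition of nowhere denseness stated in the introduction was chosen precisely so that this implication becomes immediate. Your contrapositive argument---building the witness function $t(r)\coloneqq k_r$ from the assumed failures and invoking monotonicity to pass between ``$K_{k_r}^{(r)}\notin\CCC$'' and ``$K_{k_r}^{(r)}\not\subseteq G$ for all $G\in\CCC$''---is exactly the intended one-line derivation, and your final step $H^{(r)}\subseteq K_n^{(r)}$ together with monotonicity is the right way to extend from cliques to arbitrary graphs.
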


Based on the above lemma, 
in the arxiv-version of \cite{drange2016kernelization}, 
a parameterized reduction from set cover to the
distance-$r$ dominating set problem is presented which
preserves the parameter~$k$ exactly. In that paper, the reduction
is used to prove $\mathrm{W}[2]$-hardness of 
the distance-$r$ dominating set problem. 

\begin{lemma}[Drange et al.~\cite{drange2016kernelization}]\label[lemma]{lemma:reduction}
Let $(U,\FFF,k)$ be an instance of set cover and let 
$r\in \N$. There exists a graph $G\in \mathcal{H}_{r}$
such that $(U,\FFF,k)$ is a positive instance of the
set cover problem if and only
if $(G,k)$ is a positive instance of the distance-$r$ dominating 
set problem.
\end{lemma}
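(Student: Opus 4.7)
The plan is to construct a graph $G = H_r$ based on the set cover instance $(U,\FFF,k)$ and verify the two directions of the equivalence. By taking $G$ to be the $r$-subdivision of a suitably chosen simple graph $H$, the membership $G \in \mathcal{H}_r$ is automatic. The natural starting point for $H$ is the bipartite incidence graph of the set cover instance (vertex set $\FFF \cup U$, edges $(v_F, v_u)$ whenever $u \in F$), augmented with a small amount of auxiliary structure -- extra edges among the set-vertices, or a handful of gadget vertices -- designed to calibrate the equivalence so that the parameter $k$ is preserved exactly on both sides.

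For the forward direction, given a set cover $\GGG \subseteq \FFF$ of size $k$, I would take the candidate dominating set $D = \{v_F : F \in \GGG\}$, possibly together with fixed gadget vertices absorbed into the accounting. The key observations are that each element vertex $v_u$ is adjacent in $H$ to some $v_F \in D$ by the set cover property, hence at distance exactly $r$ in $G=H_r$; and that every internal vertex of any subdivision path $P_{v_F,v_u}$ with $v_F \in D$ lies within distance $r$ from $v_F$ (since the path has length $r$). The residual branch vertices (such as set vertices $v_F$ with $F \notin \GGG$) and the internal vertices of subdivision paths not touching $D$ must be handled by the auxiliary structure of $H$.

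For the reverse direction, given a distance-$r$ dominating set $D$ of $G$ of size $k$, I would apply a \emph{projection} step: any vertex of $D$ lying in the interior of a subdivision path $P_{v_F, v_u}$ is re-assigned to the set-endpoint $v_F$, and any element vertex $v_u \in D$ is replaced by some $v_F$ with $u \in F$ (such an $F$ exists, otherwise $u$ is not coverable and the instance is trivially negative). The resulting set $D' \subseteq \{v_F : F \in \FFF\}$ has size at most $k$, and the fact that the original $D$ $r$-dominated each element vertex $v_u$ translates, after projection, into the statement that the corresponding subfamily $\GGG \subseteq \FFF$ covers every $u \in U$. Hence $\GGG$ is a set cover of size at most $k$.

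The hard part is the precise design of the auxiliary structure of $H$ so that parameter $k$ is preserved \emph{exactly} in both directions. The plain bipartite incidence graph fails because set vertices $v_F$ with $F \notin \GGG$ are not $r$-dominated by $\{v_F : F \in \GGG\}$: the nearest candidate dominator is two $H$-steps away via an element vertex, i.e.\ at distance $2r > r$ in $G$. Naive fixes, such as adding a single universal hub vertex adjacent to all $v_F$, tend to cost one extra dominator and inflate the parameter from $k$ to $k+1$. The construction of Drange et al.\ resolves this balancing act by exploiting the rigid geometry of the $r$-subdivision: the gadget is chosen so that ``uncovered'' set vertices and the internal vertices of every subdivision path are automatically $r$-dominated whenever $D$ projects to a set cover, while simultaneously every size-$k$ distance-$r$ dominating set of $G$ is forced to project back (via the argument above) to a set cover of size at most $k$.
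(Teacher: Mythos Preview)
The paper does not give a proof of this lemma at all: it is stated as a cited result from Drange et al.\ (the arXiv version of~\cite{drange2016kernelization}), with only a one-sentence attribution preceding it. There is therefore no ``paper's own proof'' to compare against.

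As for your proposal on its own merits: it is not a proof but a proof \emph{outline} that explicitly stops short of the only nontrivial step. You correctly identify the natural scaffold (bipartite incidence graph, $r$-subdivision, projection of dominators back to set-vertices), and you correctly diagnose why the bare incidence graph fails (set vertices $v_F$ with $F\notin\GGG$ sit at distance $2r$ from the nearest chosen set vertex, and subdivision-internal vertices on unchosen paths are similarly stranded). But having named the obstacle, you then write ``The construction of Drange et al.\ resolves this balancing act\ldots'' without saying what that construction is. That is the entire content of the lemma; everything else is routine. A complete proof requires you to actually specify the auxiliary structure on $H$ --- for instance, how the set-vertices are interconnected and what (if any) pendant gadgets are attached --- and then verify, case by case, that every branch vertex and every subdivision vertex of $H_r$ is $r$-dominated in the forward direction, and that the projection argument in the reverse direction does not accidentally collapse distinct dominators needed to cover distinct elements. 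Until that gadget is written down and checked, what you have is a plan, not a proof.
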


Combining \Cref{lemma:fpt-approx}, \Cref{lemma:fpt-approx-lowerbound}, \Cref{lemma:somewheredense} and \Cref{lemma:reduction} now gives the following theorem. 

\begin{theorem}
If the Gap Exponential Time Hypothesis holds, then for every
monotone somewhere dense class of graphs $\CCC$ there is no $\alpha(k)$-approximate kernel for 
the distance-$r$ dominating set problem on $\CCC$ for
any function $\alpha\colon\N\rightarrow\N$. 
\end{theorem}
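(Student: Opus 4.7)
The plan is to prove the theorem by contradiction, chaining the four lemmas that immediately precede the theorem statement. Suppose, for contradiction, that some function $\alpha$ admits an $\alpha(k)$-approximate kernel for distance-$r$ dominating set on $\CCC$ for every $r \in \N$. Fix a monotone somewhere dense class $\CCC$. By \Cref{lemma:somewheredense}, there exists an $r_0 \in \N$ such that $\mathcal{H}_{r_0} \subseteq \CCC$, so in particular $\CCC$ contains the $r_0$-subdivision of every simple graph as a subgraph. I would henceforth work with this specific $r = r_0$ and try to derive a contradiction with \Cref{lemma:fpt-approx-lowerbound}.

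Next, I would invoke the equivalence between approximate kernels and FPT-approximation algorithms. By \Cref{lemma:fpt-approx}, the assumed $\alpha$-approximate kernel for distance-$r$ dominating set on $\CCC$ yields a fixed-parameter tractable $\alpha$-approximation algorithm $\mathcal{A}$ for the same problem on $\CCC$. The remaining task is to transport this algorithm to set cover: given an arbitrary set cover instance $(U, \FFF, k)$, construct the graph $G \in \mathcal{H}_r \subseteq \CCC$ guaranteed by \Cref{lemma:reduction}, run $\mathcal{A}$ on $(G,k)$, and translate the approximate distance-$r$ dominating set in $G$ back into a subfamily of $\FFF$ of comparable size.

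The main obstacle I would expect is precisely this last translation: \Cref{lemma:reduction} is stated at the level of decision (yes/no), while what we need is that the reduction, together with a solution-lifting procedure, preserves approximation ratios on the nose. I would need to inspect the standard construction behind \Cref{lemma:reduction} from the arxiv version of \cite{drange2016kernelization} and verify two things: first, that a distance-$r$ dominating set in $G$ of size $s$ can be converted in polynomial time into a set cover of $U$ of size at most $s$ (so that an $\alpha$-approximate dominating set yields an $\alpha$-approximate set cover), and second, that the parameter $k$ is preserved exactly so that the function $\alpha$ evaluated in the dominating-set instance equals the one in the set-cover instance. Because the standard reduction encodes sets as vertex gadgets whose selection corresponds bijectively (modulo a fixed number of forced elements) to set choices, this preservation should go through with at most cosmetic adjustments.

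Putting the pieces together, the composite algorithm --- reduction, followed by $\mathcal{A}$, followed by solution lifting --- would be a fixed-parameter tractable $\alpha$-approximation algorithm for set cover. This directly contradicts \Cref{lemma:fpt-approx-lowerbound}, which rules out such algorithms under gap-ETH. Hence no $\alpha$-approximate kernel can exist for distance-$r$ dominating set on $\CCC$ for any function $\alpha$, establishing the theorem.
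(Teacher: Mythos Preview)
Your proposal is correct and follows exactly the paper's approach: the paper's proof consists solely of the sentence ``Combining \Cref{lemma:fpt-approx}, \Cref{lemma:fpt-approx-lowerbound}, \Cref{lemma:somewheredense} and \Cref{lemma:reduction} now gives the following theorem,'' and you have spelled out precisely how that combination goes. You are in fact more careful than the paper in flagging that \Cref{lemma:reduction} is phrased as a decision equivalence and that one must check the underlying construction is approximation-preserving (which it is, since the parameter $k$ is preserved exactly and the gadget allows converting any distance-$r$ dominating set back to a set cover of the same size).
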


The same statement holds for the connected distance-$r$ 
dominating set
problem, as every graph that admits a distance-$r$ dominating
set of size $k$ also admits a connected distance-$r$ dominating
set of size at most $3k$. 
\section{Conclusion}

The study of computationally hard problems on restricted classes
of inputs is a very fruitful line of research in algorithmic graph structure
theory and in particular in parameterized complexity theory. This
research is based on the observation that many problems such as
\textsc{Dominating Set}, which are considered intractable in general,
can be solved efficiently on restricted graph classes. Of course it 
is a very desirable goal in this line of research to identify the most
general classes of graphs on which certain problems 
can be solved efficiently. In this work we were able to identify
the exact limit for the existence of lossy kernels for the connected
distance-$r$ dominating set problem. One interesting open question
is whether our polynomial bounds on the size of the 
lossy kernel can be improved to pseudo-linear bounds. The first
step to achieve this is to prove the existence of a 
$(k,r)$-domination core of pseudo-linear size on every nowhere
dense class of graphs, or to avoid the use of such cores in the
construction. 

\bibliographystyle{plainurl}

\begin{thebibliography}{10}

\bibitem{adler2014interpreting}
Hans Adler and Isolde Adler.
\newblock Interpreting nowhere dense graph classes as a classical notion of
  model theory.
\newblock {\em European Journal of Combinatorics}, 36:322--330, 2014.

\bibitem{alber2004polynomial}
Jochen Alber, Michael~R. Fellows, and Rolf Niedermeier.
\newblock Polynomial-time data reduction for dominating set.
\newblock {\em J. {ACM}}, 51(3):363--384, 2004.

\bibitem{bodfomlok+09}
Hans~L. Bodlaender, Fedor~V. Fomin, Daniel Lokshtanov, Eelko Penninkx, Saket
  Saurabh, and Dimitrios~M. Thilikos.
\newblock ({M}eta) {K}ernelization.
\newblock {\em J. ACM}, 63(5):44:1--44:69, 2016.

\bibitem{chalermsook17}
Parinya Chalermsook, Marek Cygan, Guy Kortsarz, Bundit Laekhanukit, Pasin
  Manurangsi, Danupon Nanogkai, and Luca Trevisan.
\newblock From gap-eth to fpt-inapproximability: Clique, dominating set, and
  more.
\newblock {\em to appear}, 2017.
\newblock URL: \url{http://www.crab.rutgers.edu/~guyk/pub/ffpt/nabs.pdf}.

\bibitem{cygan2015parameterized}
Marek Cygan, Fedor~V Fomin, {\L}ukasz Kowalik, Daniel Lokshtanov, D{\'a}niel
  Marx, Marcin Pilipczuk, Micha{\l} Pilipczuk, and Saket Saurabh.
\newblock {\em Parameterized algorithms}, volume~3.
\newblock Springer, 2015.

\bibitem{DawarK09}
Anuj Dawar and Stephan Kreutzer.
\newblock Domination problems in nowhere-dense classes.
\newblock In {\em {FSTTCS 2009}}, volume~4 of {\em LIPIcs}, pages 157--168.
  Schloss Dagstuhl --- Leibniz-Zentrum f\"ur Informatik, 2009.

\bibitem{diestel2012graph}
Reinhard Diestel.
\newblock {\em Graph Theory, 4th Edition}, volume 173 of {\em Graduate {T}exts
  in {M}athematics}.
\newblock Springer, 2012.

\bibitem{downey2013fundamentals}
Rodney~G Downey and Michael~R Fellows.
\newblock {\em Fundamentals of parameterized complexity}, volume~4.
\newblock Springer, 2013.

\bibitem{downey1999parameterized}
Rodney~G Downey and Michael~Ralph Fellows.
\newblock {\em Parameterized complexity}.
\newblock Springer, 1999.

\bibitem{drange2016kernelization}
P{\aa}l~Gr{\o}n{\aa}s Drange, Markus~Sortland Dregi, Fedor~V. Fomin, Stephan
  Kreutzer, Daniel Lokshtanov, Marcin Pilipczuk, Micha\l{} Pilipczuk, Felix
  Reidl, Fernando {S{\'{a}}nchez Villaamil}, Saket Saurabh, Sebastian Siebertz,
  and Somnath Sikdar.
\newblock Kernelization and sparseness: the case of {D}ominating {S}et.
\newblock In {\em {STACS 2016}}, volume~47 of {\em LIPIcs}, pages 31:1--31:14.
  Schloss Dagstuhl --- Leibniz-Zentrum f\"ur Informatik, 2016.

\bibitem{eiben2017}
Eduard Eiben, Mithilesh Kumar, Amer~E. Mouawad, and Fahad Panolan.
\newblock Lossy kernels for connected dominating set on sparse graphs.
\newblock {\em CoRR}, abs/1706.09339, 2017.

\bibitem{eickmeyer2016neighborhood}
Kord Eickmeyer, Archontia~C. Giannopoulou, Stephan Kreutzer, O{-}joung Kwon,
  Michal Pilipczuk, Roman Rabinovich, and Sebastian Siebertz.
\newblock Neighborhood complexity and kernelization for nowhere dense classes
  of graphs.
\newblock In {\em 44th International Colloquium on Automata, Languages, and
  Programming, {ICALP} 2017}, pages 63:1--63:14, 2017.

\bibitem{fomin10}
Fedor~V. Fomin, Daniel Lokshtanov, Saket Saurabh, and Dimitrios~M. Thilikos.
\newblock Bidimensionality and kernels.
\newblock In {\em {SODA 2010}}, pages 503--510. {SIAM}, 2010.

\bibitem{fomin2012linear}
Fedor~V. Fomin, Daniel Lokshtanov, Saket Saurabh, and Dimitrios~M. Thilikos.
\newblock Linear kernels for ({C}onnected) {D}ominating {S}et on
  {$H$}-minor-free graphs.
\newblock In {\em {SODA 2012}}, pages 82--93. {SIAM}, 2012.

\bibitem{FominLST13}
Fedor~V. Fomin, Daniel Lokshtanov, Saket Saurabh, and Dimitrios~M. Thilikos.
\newblock Linear kernels for ({C}onnected) {D}ominating {S}et on graphs with
  excluded topological subgraphs.
\newblock In {\em {STACS 2013}}, volume~20 of {\em LIPIcs}, pages 92--103.
  Schloss Dagstuhl --- Leibniz-Zentrum f\"ur Informatik, 2013.

\bibitem{GroheKRSS15}
Martin Grohe, Stephan Kreutzer, Roman Rabinovich, Sebastian Siebertz, and
  Konstantinos Stavropoulos.
\newblock Colouring and covering nowhere dense graphs.
\newblock {\em WG 2015}, 325--338, 2015.

\bibitem{grohe2014deciding}
Martin Grohe, Stephan Kreutzer, and Sebastian Siebertz.
\newblock Deciding first-order properties of nowhere dense graphs.
\newblock In {\em STOC 2014}, pages 89--98. ACM, 2014.

\bibitem{kierstead2003orders}
Hal~A.\ Kierstead and Daqing Yang.
\newblock Orderings on graphs and game coloring number.
\newblock {\em Order}, 20(3):255--264, 2003.

\bibitem{KreutzerPRS16}
Stephan Kreutzer, Michal Pilipczuk, Roman Rabinovich, and Sebastian Siebertz.
\newblock The generalised colouring numbers on classes of bounded expansion.
\newblock In {\em 41st International Symposium on Mathematical Foundations of
  Computer Science, {MFCS} 2016}, pages 85:1--85:13, 2016.

\bibitem{siebertz2016polynomial}
Stephan Kreutzer, Roman Rabinovich, and Sebastian Siebertz.
\newblock Polynomial kernels and wideness properties of nowhere dense graph
  classes.
\newblock In {\em {SODA}}, pages 1533--1545. {SIAM}, 2017.

\bibitem{lokshtanov2016lossy}
Daniel Lokshtanov, Fahad Panolan, M.~S. Ramanujan, and Saket Saurabh.
\newblock Lossy kernelization.
\newblock In {\em Proceedings of the 49th Annual {ACM} {SIGACT} Symposium on
  Theory of Computing, {STOC} 2017}, pages 224--237, 2017.

\bibitem{nevsetvril2008grad}
Jaroslav Ne{\v{s}}et{\v{r}}il and Patrice Ossona~de Mendez.
\newblock Grad and classes with bounded expansion~{I}. {D}ecompositions.
\newblock {\em European Journal of Combinatorics}, 29(3):760--776, 2008.

\bibitem{nevsetvril2008gradb}
Jaroslav Ne{\v{s}}et{\v{r}}il and Patrice Ossona~de Mendez.
\newblock Grad and classes with bounded expansion~{II}. {A}lgorithmic aspects.
\newblock {\em European Journal of Combinatorics}, 29(3):777--791, 2008.

\bibitem{nevsetvril2008gradc}
Jaroslav Ne{\v{s}}et{\v{r}}il and Patrice Ossona~de Mendez.
\newblock Grad and classes with bounded expansion {III}. {R}estricted graph
  homomorphism dualities.
\newblock {\em European Journal of Combinatorics}, 29(4):1012--1024, 2008.

\bibitem{nevsetvril2010first}
Jaroslav Ne{\v{s}}et{\v{r}}il and Patrice Ossona~de Mendez.
\newblock First order properties on nowhere dense structures.
\newblock {\em The Journal of Symbolic Logic}, 75(03):868--887, 2010.

\bibitem{nevsetvril2011nowhere}
Jaroslav Ne{\v{s}}et{\v{r}}il and Patrice Ossona~de Mendez.
\newblock On nowhere dense graphs.
\newblock {\em European Journal of Combinatorics}, 32(4):600--617, 2011.

\bibitem{sparsity}
Jaroslav Ne\v{s}et\v{r}il and Patrice {Ossona de Mendez}.
\newblock {\em Sparsity --- {G}raphs, {S}tructures, and {A}lgorithms},
  volume~28 of {\em Algorithms and combinatorics}.
\newblock Springer, 2012.

\bibitem{philip2012polynomial}
Geevarghese Philip, Venkatesh Raman, and Somnath Sikdar.
\newblock Polynomial kernels for dominating set in graphs of bounded degeneracy
  and beyond.
\newblock {\em ACM Transactions on Algorithms (TALG)}, 9(1):11, 2012.

\bibitem{pilipczuk2017wideness}
Micha{\l} Pilipczuk, Sebastian Siebertz, and Szymon Toru{\'n}czyk.
\newblock On wideness and stability.
\newblock {\em arXiv preprint arXiv:1705.09336}, 2017.

\bibitem{reidl2016characterising}
Felix Reidl, Fernando {S{\'{a}}nchez Villaamil}, and Konstantinos Stavropoulos.
\newblock Characterising bounded expansion by neighbourhood complexity.
\newblock {\em CoRR}, abs/1603.09532, 2016.

\bibitem{siebertz16}
Jan van~den Heuvel, Patrice~Ossona de~Mendez, Roman Rabinovich, and Sebastian
  Siebertz.
\newblock On the generalised colouring numbers of graphs that exclude a fixed
  minor.
\newblock {\em Electronic Notes in Discrete Mathematics}, 49:523--530, 2015.

\bibitem{zhu2009coloring}
Xuding Zhu.
\newblock Colouring graphs with bounded generalized colouring number.
\newblock {\em Discrete Mathematics}, 309(18):5562--5568, 2009.

\end{thebibliography}


\end{document}